\newcommand{\bes} {\begin{subequations}}
\newcommand{\ees} {\end{subequations}}
\newcommand{\beq}{\begin{equation}}
\newcommand{\eeq}{\end{equation}}
\newcommand{\bea}{\begin{eqnarray}}
\newcommand{\eea}{\end{eqnarray}}
\newcommand\Tr{\mathrm{Tr}}
\newtheorem{theorem}{Theorem}
\newtheorem*{theorem*}{Theorem}
\newtheorem{lemma}{Lemma}
\newtheorem{definition}{Definition}
\begin{document}
\title{On the Computational Complexity of Curing the Sign Problem}

\author{Milad Marvian}
\affiliation{Department of Electrical Engineering, University of Southern California, Los Angeles, California 90089, USA}
\affiliation{Center for Quantum Information Science \&
Technology, University of Southern California, Los Angeles, California 90089, USA}

\author{Daniel A. Lidar}
\affiliation{Department of Electrical Engineering, University of Southern California, Los Angeles, California 90089, USA}
\affiliation{Center for Quantum Information Science \&
Technology, University of Southern California, Los Angeles, California 90089, USA}
\affiliation{Department of Physics and Astronomy, University of Southern California, Los Angeles, California 90089, USA}
\affiliation{Department of Chemistry, University of Southern California, Los Angeles, California 90089, USA}
\author{Itay Hen}
\affiliation{Center for Quantum Information Science \& Technology, University of Southern California, Los Angeles, California 90089, USA}
\affiliation{Department of Physics and Astronomy, University of Southern California, Los Angeles, California 90089, USA}
\affiliation{Information Sciences Institute, University of Southern California, Marina del Rey, CA 90292}

\begin{abstract}
Quantum many-body systems whose Hamiltonians are non-stoquastic, i.e., have positive off-diagonal matrix elements in a given basis, are known to pose severe limitations on the efficiency of Quantum Monte Carlo algorithms designed to simulate them, due to the infamous sign problem. 
We study
the computational complexity associated with `curing' non-stoquastic Hamiltonians, i.e., transforming them into sign-problem-free
ones. We prove that if such transformations are limited to single-qubit Clifford group elements or general single-qubit orthogonal matrices, finding the curing transformation is NP-complete. We discuss the implications of this result.
\end{abstract}
\maketitle

\textit{Introduction.}---The `negative sign problem', or simply the `sign problem'~\cite{Loh-PRB-90}, is the single most important unresolved challenge in quantum many-body simulations, preventing physicists, chemists, and material scientists alike from 
being able to efficiently simulate
many of the most profound macroscopic quantum physical phenomena of Nature, in areas as diverse as high temperature superconductivity and material design through neutron stars to lattice quantum chromodynamics.
More specifically, the sign problem slows down Quantum Monte Carlo (QMC) algorithms~\cite{Landau:2005:GMC:1051461,newman}, which are in many cases the only practical method available for studying large quantum many-body systems, 
to the point where they become practically useless. QMC algorithms evaluate thermal averages of physical observables by the (importance-) sampling of quantum configuration space via the decomposition of the partition function into a sum of easily computable terms, or weights, which are in turn interpreted as probabilities in a Markovian process. Whenever this decomposition contains negative terms, QMC methods tend to converge exponentially slowly.  Most dishearteningly, it is typically the systems with the richest quantum mechanical behavior that exhibit the most severe sign problem. 

In defining the scope under which QMC methods are 
sign-problem free, 
the concept of `stoquasticity', first introduced by Bravyi et al.~\cite{Bravyi:QIC08}, has recently become central.~%
The most widely used definition of a local stoquastic Hamiltonian is:  
\begin{definition}[\cite{Bravyi:2009sp}]
\label{def:1}
A local Hamiltonian, $H =  \sum_{a=1}^M H_a$
is called stoquastic with respect to a basis ${\cal B}$ iff all $H_a$ have only non-positive off-diagonal matrix elements in the basis ${\cal B}$.
\end{definition}
In the basis ${\cal B}$, the partition function decomposition of stoquastic Hamiltonians leads to a sum of strictly non-negative weights and such Hamiltonians hence do not suffer from the sign problem~\footnote{E.g., 
in the path-integral formulation of QMC with respect to a basis ${\cal{B}}=\{ b \}$, the partition function 
$Z$ is reduced to an $L$-fold product of sums over complete sets of basis states, $\{b_1\},\ldots,\{b_L\}$, which are weighted by the size of the imaginary-time slice $\Delta \tau=\beta/L$ and the off-diagonal matrix elements of $H$. Namely,
$Z \approx \prod_{l=1}^{L} \sum_{b_l} \bra{b_l}  \mathrm{e}^{-\Delta \tau H_{l,l+1}}\ket{b_{l+1}}$,
where  $L$ is the number of slices and periodic boundary conditions are assumed. 
The connection to stoquasticity is that when all the off-diagonal matrix elements, $H_{j,j+1}$ in the given basis are non-positive, these weights are purely positive for each time slice.
}.
On the other hand, non-stoquastic Hamiltonians, whose local terms have positive off-diagonal entries induce negative weights and generally lead to the sign problem~\cite{Loh-PRB-90,Troyer2005} unless certain symmetries are present. 

The concept of stoquasticity is also important from a computational complexity theory viewpoint. For example, the complexity class StoqMA associated with the problem of deciding whether the ground state energy of stoquastic local Hamiltonians is above or below certain values, is expected to be strictly contained in the complexity class QMA, that poses the same decision problem for general local Hamiltonians~\cite{Bravyi:QIC08}. Additionally, StoqMA appears as an essential part of the dichotomy of two-local qubit Hamiltonians problem~\cite{Cubitt:2016vl}. 

However, stoquasticity does not imply efficient equilibration%
~\footnote{Throughout this work we reserve the term `efficient' to mean that the algorithm requires at most a polynomial run-time in the problem size (the number of variables). For example, an algorithm  equilibrates efficiently if it can correctly samples from the Gibbs distribution of the Hamiltonian in question  with at most a polynomial run-time in the problem size  and the given statistical error [see part (iii) of the definition of the sign problem given in Ref.~\cite{Troyer2005} for a precise statement.]}. 
E.g., finding the ground state energy of a classical Ising model---which is trivially stoquastic---%
is already NP-hard~\cite{Barahona1982}. 
Conversely, non-stoquasticity does not imply inefficiency: 
there exist numerous cases where an apparent sign problem (i.e., non-stoquasticity) is the result of a naive basis choice that can be transformed away, resulting in efficient equilibration~\cite{PhysRevB.89.134422,Alet,PhysRevB.93.054408}. 

Here we focus on the latter, i.e, whether non-stoquasticity can be `cured'. To this end we propose an alternative definition of stoquasticity that is based on the computational complexity associated with transforming non-stoquastic Hamiltonians into stoquastic ones. 

\textit{Stoquasticity revisited}.---To motivate our alternative definition, we first note 
that any Hamiltonian can trivially be presented as stoquastic via diagonalization. However, the complexity of finding the diagonalizing basis generally grows exponentially with the size of the system (as noted in Ref.~\cite{Troyer2005}) and the new basis will generally be highly non-local and hence not efficiently representable. 
We also note that it is 
straightforward to construct examples where apparent non-stoquasticity may be transformed away. For example, consider the $n$-spin Hamiltonian $H_{XZ}= \sum \tilde{J}_{ij} X_iX_j- \sum J_{ij} Z_i Z_j$ with non-negative $J_{ij},\tilde{J}_{ij}$, where $X_i$ and $Z_i$ are the Pauli matrices acting on spin $i$. This Hamiltonian is non-stoquastic, but can easily be converted into stoquastic form.
Denoting the Hadamard gate (which swaps $X$ and $Z$) by $W$,
consider the transformed Hamiltonian 
$H_{ZX}=W^{\otimes n}  H_{XZ} W^{\otimes n}= -\sum J_{ij} X_iX_j+ \sum \tilde{J}_{ij}  Z_i Z_j$ which is stoquastic. The sign problem of the original Hamiltonian, $H_{XZ}$,  can thus be efficiently cured by 
a unit-depth circuit of single-qubit rotations.
Moreover, thermal averages are invariant under 
unitary transformations \footnote{{The thermal average is $\protect\langle A \protect\rangle \protect\equiv \protect\frac{1}{Z}\Tr(e^{-\beta H} A)$ where $\protect\beta$ is the inverse temperature, and our claim is that $\protect\langle A \protect\rangle = \protect\langle A' \protect\rangle$ where $A' = UAU^\dagger$ for unitary $U$. Here is the proof. First, $Z=\Tr(e^{-\beta H})=\Tr( U e^{ -\beta H } U^\dagger) = \Tr( e^{ -\beta H'} ) = Z'$ where $H' = UHU^\dagger$. Thus $\protect\langle A \protect\rangle= \frac{1}{Z}\Tr(U e^{-\beta H} U^\dagger UA U^\dagger) = \frac{1}{Z'}\Tr(e^{-\beta H'} A') = \protect\langle A' \protect\rangle$.}},
so that if QMC is run on the transformed, stoquastic Hamiltonian, it is no longer slowed down by the sign problem. 
Finally, note that Definition~\ref{def:1} implies that a local Hamiltonian, $H =  \sum_{a=1}^M H_a$, is stoquastic if \textit{all} terms $H_a$ are stoquastic. However, there always remains some arbitrariness in the manner in which the total Hamiltonian is decomposed into the various terms.  Consider, e.g., $H=-2X_1 +X_1Z_2$. The second term  separately is non-stoquastic, whereas the sum is stoquastic. This suggests that the grouping of terms matters [see the Supplemental Information (SI) Sec.~\ref{SI1} for a method to find such a grouping of terms].

The above considerations motivate a reexamination of the concept of stoquasticity from 
a {complexity-theory} perspective, which can have important consequences for QMC simulations~\footnote{A related approach was discussed by Barbara Terhal at the AQC17 conference~\url{http://www.smapip.is.tohoku.ac.jp/~aqc2017/program.html}.}.
For example, given a $k$-local non-stoquastic Hamiltonians $H = \sum_a H_a$ (where each summand is a $k$-local term, i.e., a tensor product of at most $k$ non-identity single-qubit Pauli operators), we may ask whether there exists a constant-depth quantum circuit $U$ such that 
$H'=U H U^{\dagger}$ can be written as a $k'$-local stoquastic Hamiltonian $H'=\sum_a H'_a$ and if so, what the complexity associated with finding it is. 
It is the answer to the latter question that determines whether the Hamiltonian in question should be considered \emph{computationally} stoquastic, i.e., whether it is feasible (in a complexity theoretic sense) to find a `curing' transformation $U$, which would then allow QMC to compute thermal averages with $H$ by replacing it with $H'$. More formally, we propose the following definition:

\begin{definition}
A unitary transformation $U$ `cures' a non-stoquastic Hamiltonian $H$ (i.e., removes its sign problem) represented in a given basis if $H'=UHU^\dagger$ is stoquastic, i.e., its off-diagonal elements in the given basis are all non-positive. A family of local Hamiltonians $\{H\}$ represented in a given basis is \emph{efficiently curable} (or, equivalently, \emph{computationally stoquastic}) if there exists a polynomial-time classical algorithm such that for any member of the family $H$, the algorithm can find a unitary $U$ with the property that $H'=UHU^\dagger$ is local and stoquastic in the given basis. 
\end{definition}
As an example, the Hamiltonian $H_{XZ}$ considered above is efficiently curable. General local Hamiltonians are unlikely to be efficiently curable as this would imply the implausible result that QMA$=$StoqMA~\cite{hastings2016quantum}.

Note that given some class of basis transformations, our definition distinguishes between the ability to cure a Hamiltonian \emph{efficiently} or in \emph{in principle}. For example, deciding whether a Pauli group element $U=\bigotimes_{i=1}^n u_i$, where $u_i$ belongs to the single-qubit Pauli group $\mathcal{P}_1=\{I,X,Y,Z\}\times\{\pm 1,\pm i\}$, can cure each term $\{H_a\}$ of a $k$-local Hamiltonians $H=\sum_a H_a$, can be solved in polynomial time (see the SI, Sec.~\ref{SI:Pauli}). However, the Hamiltonian $H=X_1 Z_2$ cannot be made stoquastic in principle using a Pauli group element, as conjugating it with Pauli operators results in $\pm X_1 Z_2$, both of which are non-stoquastic. (See Ref.~\cite{hastings2016quantum} for results on an intrinsic sign problem for local Hamiltonians.)

Going beyond Pauli operators, 
our main result is a proof that even for particularly simple local transformations such as the single-qubit Clifford group and real-valued rotations the problem of deciding whether a family of local Hamiltonians is curable cannot be solved efficiently, in the sense that it is equivalent to solving 3SAT and is hence NP-complete.

We assume that a $k$-local Hamiltonian $H=\sum_a H_a$ is described by specifying each of the local terms $H_a$, and the goal is to find a unitary $U$ that cures each of these local terms. In general, a unitary $U$ that cures the total Hamiltonian $H$ may not necessarily cure all $H_a$ separately. However, for all of the constructions in this paper we prove that a unitary $U$  cures $H$ if and only if it cures all  $H_a$ separately. The decomposition $\{H_a\}$ is merely used to guarantee that verification is efficient and the problem is contained in NP. 

\textit{Complexity of curing the sign problem for the single-qubit Clifford group}.---%
To study the computational complexity associated with finding a curing transformation $U$, 
we shall consider for simplicity single-qubit unitaries $U=\bigotimes_{i=1}^n u_i$ and only real-valued Hamiltonian matrices. As we shall show, even subject to these simplifying restrictions the problem of finding a curing transformation $U$ is computationally hard.

We begin by considering the computational complexity of finding local rotations from a discrete and restricted set of rotations. Specifically, we consider the single-qubit Clifford group $\mathcal{C}_1$  (with group action defined as conjugation by one of its elements), defined as $\mathcal{C}_1 = \{U\ | \ UgU^\dagger \in \mathcal{P}_1\ \forall g\in \mathcal{P}_1\}$, i.e., the normalizer of $\mathcal{P}_1$. It is well known that $\mathcal{C}_1$ is generated by $W$ and the phase gate $P=\mathrm{diag}(1,i)$ \cite{nielsen2010quantum}.
\begin{theorem} \label{thm1}
Let $U=\bigotimes_{i=1}^n u_i$, where $u_i$ belongs to the single-qubit Clifford group. Deciding whether there exists a curing unitary $U$ for $3$-local Hamiltonians is NP-complete.
\end{theorem}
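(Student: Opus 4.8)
The plan is to establish the two defining properties of NP-completeness separately: containment in NP, and NP-hardness by a polynomial-time reduction from 3SAT.

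\emph{Containment in NP.} A certificate for a yes-instance is the list $(u_1,\dots,u_n)$ of single-qubit Cliffords, of bit-size $O(n)$ since $|\mathcal{C}_1|=24$. Given $U=\bigotimes_{i=1}^n u_i$, each conjugated term $U H_a U^\dagger$ is still supported on (at most) the same three qubits, because $u_i$ acts only on qubit $i$; and since each $u_i$ normalizes $\mathcal{P}_1$, the matrix $U H_a U^\dagger$ is computed in constant time (again a real matrix of dimension at most $8$, a signed sum of Pauli strings). One then checks that every off-diagonal entry of each $U H_a U^\dagger$ is non-positive, which is polynomial-time verification. Finally one invokes — and, for the instances produced by the reduction, verifies — the structural statement emphasized in the text that here $U$ cures $H=\sum_a H_a$ iff it cures every $H_a$ separately: the ``if'' direction is immediate because off-diagonal entries of stoquastic matrices add to non-positive numbers, while for ``only if'' one must rule out cancellations between the off-diagonal parts of distinct terms. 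Hence the problem is in NP.

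\emph{The reduction.} Given a 3SAT formula $\phi$ with variables $x_1,\dots,x_n$ and clauses $C_1,\dots,C_m$, I would build a $3$-local real Hamiltonian $H_\phi=\sum_i V_i+\sum_j G_j$ on $n$ qubits (plus, if needed for the cleanest gadget behaviour, a constant number of ancillas, e.g.\ one per clause). The \emph{variable gadget} $V_i$ is a constant number of low-weight real terms on qubit $i$ (and possibly an ancilla) chosen so that $u_i$ cures $V_i$ precisely when $u_i$ lies in a two-element subset $\{u_i^{0},u_i^{1}\}\subset\mathcal{C}_1$, with $u_i^{1}=u_i^{0}w_i$ for a fixed order-two Clifford $w_i$; here $u_i^{0}$ encodes $x_i=\mathrm{false}$ and $u_i^{1}$ encodes $x_i=\mathrm{true}$, so any curing $U=\bigotimes u_i$ is forced to define a genuine Boolean assignment. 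The \emph{clause gadget} $G_j$ for a clause $C_j$ on variables $x_a,x_b,x_c$ is a constant-size collection of $3$-local real terms obtained from one canonical gadget by conjugating with $w_i$ on exactly those of the three qubits whose literal in $C_j$ is negated (which, since $u_i^1=u_i^0w_i$, swaps ``true'' and ``false'' on that qubit). The canonical gadget is designed so that, with each $u_i$ restricted to $\{u_i^0,u_i^1\}$, the collection is simultaneously curable exactly on the $2^3-1$ assignments of these three variables for which at least one literal is true.

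\emph{Correctness.} Granting gadgets with these properties, one argues both directions. If $\phi$ has a satisfying assignment $\vec x$, set $u_i=u_i^{x_i}$; then every $V_i$ is cured by design and every clause gadget is cured because $C_j$ is satisfied, so $U=\bigotimes u_i$ cures each term of $H_\phi$ and hence cures $H_\phi$. Conversely, if some $U=\bigotimes u_i$ cures $H_\phi$, then by the structural lemma it cures every $V_i$, forcing $u_i\in\{u_i^0,u_i^1\}$ and yielding an assignment $\vec x$; it also cures every clause gadget, which by the gadget property forces every clause of $\phi$ to be satisfied. Thus $H_\phi$ is curable iff $\phi$ is satisfiable, and $H_\phi$ has size polynomial in $n+m$, completing the reduction.

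\emph{Main obstacle.} The crux is the explicit construction and verification of the gadgets. Cutting the set of $u_i$ that cure $V_i$ down to exactly two Cliffords requires analysing the orbit structure of $\mathcal{C}_1$ acting (as signed permutations of the Bloch axes) on real single-qubit, and possibly two-qubit, operators; a single real one-qubit term typically leaves several Cliffords admissible, so one must combine a few terms or use an ancilla. Harder is the clause gadget: a naive single $3$-local term cannot encode a genuine $3$-clause — if $u_i^0$ and $u_i^1$ differed merely by a Pauli $X_i$, the clause conjugations would act only by permuting computational basis states, which preserves stoquasticity and so makes the term ``all-or-nothing'' in the truth assignment — so one is forced to choose $w_i$ of Hadamard type, and/or to use several terms per clause, and/or an auxiliary qubit, together with a finite but delicate case analysis over the $2^3$ truth assignments. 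Finally one must make the reduction tight: no ``spurious'' curing $U$ may evade the pinning or exploit the stabilizer of a gadget, and no accidental cancellation between the off-diagonal parts of different terms of $H_\phi$ may occur (this last point is exactly what certifies that the reduction's output stays inside NP). I expect the combined demands on the clause gadget — correct satisfying set, compatibility with the restricted single-qubit Clifford set, and tightness — to be the principal difficulty; once a working gadget family is in hand, the remaining arguments are routine.
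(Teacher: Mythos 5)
Your architecture is exactly the paper's --- a pinning gadget per variable, a per-clause gadget obtained by conjugating one canonical gadget on the negated literals, a no-cancellation argument to show curing the sum forces curing each term, and trivial NP membership from term-by-term verification --- but the proposal stops precisely where the proof begins. You yourself flag the gadget constructions as ``the crux'' and ``the principal difficulty'' and then leave them unconstructed; since every correctness claim is conditional on ``granting gadgets with these properties,'' this is a genuine gap, not a completed proof. For the record, the paper's canonical clause gadget is the purely diagonal $3$-local term $H_{ijk}^{(111)} = Z_iZ_jZ_k - 3(Z_i+Z_j+Z_k) - (Z_iZ_j+Z_iZ_k+Z_jZ_k)$, which stays stoquastic under any pattern of Hadamards except all three, where a $+X_iX_jX_k$ term appears; uniqueness of that Pauli string across clauses is what rules out cancellation. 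The variable pinning is done globally by adding $cH_0 = -c\sum_i(X_i+Z_i)$ with $c$ large.

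One concrete flaw in your plan as literally stated: a variable gadget that restricts $u_i$ to \emph{exactly two} Cliffords $\{u_i^0,u_i^1\}$ cannot exist. Conjugation by a tensor product of $X$'s and identities only permutes the off-diagonal entries of any matrix (this is the paper's SI Lemma), so the set of curing single-qubit Cliffords is always closed under left multiplication by $X$; the best you can do is pin to the four-element set $\{I,X,W,XW\}$ (those Cliffords preserving stoquasticity of $-(X_i+Z_i)$) and then quotient by the harmless $X$ to get the effective two-element choice $\{I,W\}$. You half-notice this in your ``main obstacle'' paragraph when observing that $X$-type conjugations preserve stoquasticity, but you draw from it only the conclusion that $w_i$ must be Hadamard-like, not the conclusion that your stated pinning target is unattainable and must be replaced by pinning-up-to-$X$-equivalence. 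With that correction and the explicit gadgets above, your outline becomes the paper's proof.
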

We prove this theorem by reducing the problem 
to the canonical NP-complete problem known as 3SAT ($3$-satisfiability)~\cite{Karp:21-problems}, beginning with the following lemma:
\begin{lemma} \label{lemmaIW}
	Let $u_i\in \{I,W_i\}$, where $I$ is the identity operation and $W_i$ is a Hadamard gate on the $i$-th qubit. Deciding whether there exists a curing unitary $U=\bigotimes_{i=1}^n u_i$ for $3$-local Hamiltonians is NP-complete.
\end{lemma}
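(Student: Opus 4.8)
The plan is to establish both containment in NP and NP-hardness (the latter via a polynomial-time reduction from $3$SAT). Membership in NP is the easy direction: a witness is the list of choices $u_i\in\{I,W_i\}$, $i=1,\dots,n$, and to verify it one conjugates each of the polynomially many given local terms $H_a$ by $\bigotimes_{i\in\mathrm{supp}(H_a)}u_i$ --- a matrix of size at most $8\times 8$ --- and checks that all its off-diagonal entries are non-positive. This is a valid acceptance test because, for the Hamiltonians produced by the reduction, a transformation of this form cures $H$ if and only if it cures every $H_a$ separately; establishing this equivalence for the construction is part of the work.

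For NP-hardness, start from a $3$SAT instance with clauses $C_1,\dots,C_m$ and first preprocess it so that each clause has three distinct variables and no two clauses involve the same triple of variables --- this is achieved by the satisfiability-preserving replacement of $\ell_1\vee\ell_2\vee\ell_3$ with $(\ell_1\vee\ell_2\vee w)\wedge(\bar w\vee\ell_2\vee\ell_3)$, using a fresh variable $w$ for each clause. Use one qubit per variable and identify $u_i=W_i$ with $x_i=\text{true}$ and $u_i=I$ with $x_i=\text{false}$, writing $y_i\in\{0,1\}$ for this bit. The central object is a $3$-local \emph{clause gadget}: for a positive clause $x_a\vee x_b\vee x_c$ I want a real Hermitian operator $G$ on qubits $a,b,c$ such that conjugating $G$ by $u_a\otimes u_b\otimes u_c$ yields an operator with all off-diagonal entries $\le 0$ exactly when $(u_a,u_b,u_c)\neq(I,I,I)$ --- i.e.\ exactly when the assignment satisfies the clause --- and such that in the single bad frame every \emph{positive} off-diagonal entry couples two computational-basis states that differ on all three of $a,b,c$. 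The $2$-local analogue already illustrates the mechanism and this last feature: one checks directly that $X_1X_2-X_1-X_2$ is stoquastic under conjugation by $u_1\otimes u_2$ for all $(u_1,u_2)\in\{I,W\}^2$ except $(I,I)$, where the entries coupling $\ket{00}\leftrightarrow\ket{11}$ and $\ket{01}\leftrightarrow\ket{10}$ both become $+1$; this realizes the constraint $y_a\vee y_b$, with the offending entries coupling states that differ on \emph{both} qubits.

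For the three-qubit gadget I would write $G$ as a real linear combination of Pauli strings on $a,b,c$ and pin down admissible coefficients by a finite linear feasibility computation: since conjugation by $W_i$ merely exchanges $X_i\leftrightarrow Z_i$, in each of the eight frames every off-diagonal matrix element is an explicit signed sum of a handful of coefficients, so the seven good frames impose linear inequalities ($\le 0$, one per element), the bad frame $(I,I,I)$ must violate at least one, and I additionally require the violated element(s) to sit on all-three-differ positions. Clauses with negated literals reduce to the positive case by conjugating $G$ with fixed Hadamards on the qubits carrying the negated literals, which relocates the unique bad frame to the unique falsifying assignment and preserves the all-three-differ property. Now set $H=\sum_C G_C$, taking $\{G_C\}$ as the local decomposition. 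If an assignment $\vec y$ satisfies the formula, every $U G_C U^{\dagger}$ is stoquastic, hence so is $UHU^{\dagger}$. If $\vec y$ falsifies some $C$, then $U G_C U^{\dagger}$ has a positive entry at a position $p$ coupling two states that differ on all three qubits of $C$; since no other clause is on the same triple, every other gadget $G_{C'}$ acts as identity on at least one qubit of $C$ and therefore couples only states that agree there, so it vanishes at $p$ --- whence $(UHU^{\dagger})_p>0$ and $U$ does not cure $H$. The same observation shows that $U$ cures $H$ iff it cures every $G_C$. Thus $H$ is curable by a local $\bigotimes_i u_i$ with $u_i\in\{I,W_i\}$ iff the formula is satisfiable, and the reduction is polynomial.

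The main obstacle is the clause gadget. Its required curability predicate --- `not all of $u_a,u_b,u_c$ equal $I$' --- is not an affine condition over $\mathbb{F}_2$; since a single Pauli string conjugated by $\{I,W_i\}$-gates stays a scaled Pauli string with unchanged coefficient, and is stoquastic exactly when it is diagonal or a negative multiple of a product of $X$'s and identities (an affine condition on the frame), the gadget cannot be a single Pauli term and must genuinely combine several terms. One has to engineer this while \emph{simultaneously} confining every bad off-diagonal entry to an all-three-differ position, so that distinct terms of $H$ cannot cancel; I expect the feasibility check for the gadget's linear system, together with this non-interference bookkeeping, to be the bulk of the proof, with the preprocessing, the negation handling, and the assembly routine.
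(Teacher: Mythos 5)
Your proof has exactly the architecture of the paper's argument (NP membership by checking each local term; hardness by a clause-gadget reduction from 3SAT in which the unique falsifying frame produces a positive entry at positions where all three clause qubits differ, so that no other gadget can cancel it). The one genuine gap is that the central object --- the $3$-local clause gadget --- is never exhibited: you specify its required properties and defer its existence to ``a finite linear feasibility computation,'' while yourself flagging that feasibility is not obvious because the curability predicate is not affine over $\mathbb{F}_2$. A reduction without the gadget is not a proof. The gap closes immediately, however, because the symmetric extension of your own two-qubit prototype $X_1X_2-X_1-X_2$ works: take
\begin{equation*}
G \;=\; X_aX_bX_c \;-\; 3\,(X_a+X_b+X_c)\;-\;(X_aX_b+X_aX_c+X_bX_c)\,,
\end{equation*}
which is the Hadamard conjugate of the gadget the paper uses, $H^{(111)}_{ijk}=Z_iZ_jZ_k-3\sum Z-\sum ZZ$. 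Checking your specification frame by frame (grouping off-diagonal entries by their $X$-support, which is the bookkeeping you already use in the $2$-local case): with all three Hadamards applied everything is diagonal; with exactly two applied the only off-diagonal block is $X_a(Z_bZ_c-3-Z_b-Z_c)$ with entries in $\{-4,0\}$; with exactly one applied the blocks are $X_aX_b(Z_c-1)$ and $-X_a(3+Z_c)$, $-X_b(3+Z_c)$, all non-positive; and in the bad frame $(I,I,I)$ the only positive entries are the $+1$'s of $X_aX_bX_c$, sitting precisely on the all-three-differ positions, untouched by the $1$- and $2$-local terms. With this $G$ inserted, your negation handling, assembly, and non-interference argument go through verbatim and coincide with the paper's proof.

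Two minor remarks. First, your preprocessing to force distinct variable triples is sound but unnecessary: the coefficient of the weight-three Pauli string in any gadget is $+1$ and is unchanged by the $\{I,W\}$ relabelings, so a second clause on the same triple can only ever contribute $+X_aX_bX_c$ (and only when it too is falsified); cancellation at an all-three-differ position is impossible, which is the observation the paper relies on. Second, your NP-membership argument is the paper's, including the caveat that one must separately establish that, for the constructed instances, curing $H$ is equivalent to curing each $G_C$ --- which your non-interference step does supply.
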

To prove Lemma~\ref{lemmaIW}, we first introduce a mapping between 3SAT and $3$-local Hamiltonians. Our goal is to find an assignment of $n$ binary variables $x_i\in\{0,1\}$ such that the unitary $W(x)\equiv\bigotimes_{i=1}^{n} W_i^{x_i}$ [where $x\equiv (x_1,\dots,x_n)$] rotates an input Hamiltonian to a stoquastic Hamiltonian.
We use the following $3$-local Hamiltonian as our building block:
\bea
H_{ijk}^{(111)} =  Z_i Z_j Z_k &-& 3 (Z_i +Z_j +Z_k)\nonumber \\
&-&(Z_i Z_j+ Z_i Z_k+Z_j Z_k) \ ,
\label{eq:Hijk111}
\eea
where $i,j$ and $k$ are three different qubit indices.
It is straightforward to check that 
\bea
\label{eq:2}
W(x) H_{ijk}^{(111)}W^{\dagger}(x)&=&  \\
W_i^{x_i} \otimes W_j^{x_j} &\otimes& W_k^{x_k}( H_{ijk}^{(111)} )W_i^{x_i} \otimes W_j^{x_j} \otimes W_k^{x_k} \nonumber
\eea
is stoquastic (``True") for any combination of the binary variables $(x_i,x_j,x_k)$ except for $(1,1,1)$, which makes Eq.~\eqref{eq:2} non-stoquastic (``False").
This is precisely the truth table for the 3SAT clause $(\bar{x}_i \vee \bar{x}_j \vee \bar{x}_k)$,
 where $\vee$ denotes the logical disjunction and bar denotes negation.
We can define the other seven possible 3SAT clauses by conjugating $H_{ijk}^{(111)} $ with Hadamard or identity gates:
\beq
\label{eq:Hijkabc}
H_{ijk}^{(\alpha\beta\gamma)} =   
W_i^{\bar{\alpha}} \otimes W_j^{\bar{\beta}} \otimes W_k^{\bar{\gamma}} \cdot H_{ijk}^{(111)} \cdot W_i^{\bar{\alpha}} \otimes W_j^{\bar{\beta}} \otimes W_k^{\bar{\gamma}}\ .
\eeq
The Hamiltonian $W(x) H_{ijk}^{(\alpha\beta\gamma)}W^{\dagger}(x)$ is non-stoquastic (corresponds to a clause that evaluates to False) only when $(x_i,x_j,x_k)=(\alpha,\beta,\gamma)$, and is stoquastic (True) for any other choice of the variables $x$.
We have thus established a bijection between $3$-local Hamiltonians $H_{ijk}^{(\alpha\beta\gamma)}$, with $(\alpha,\beta,\gamma)\in\{0,1\}^3$, and the eight possible 3SAT clauses on three variables $(x_i,x_j,x_k)\in\{0,1\}^3$. We denote these clauses, which evaluate to False iff $(x_i,x_j,x_k) = (\alpha,\beta,\gamma)$, by $C_{ijk}^{(\alpha\beta\gamma)}$.

The final step of the construction is to add together such `3SAT-clause Hamiltonians' to form 
\beq
\label{eq:H3SAT}
H_{\mathrm{3SAT}}= \sum_{C} H_{ijk}^{(\alpha\beta\gamma)} \,,
\eeq
where $C$ is the set of all $M$ clauses in the given 3SAT instance $\bigwedge C_{ijk}^{(\alpha\beta\gamma)}$. 
Having established a bijection between 3SAT clauses and 3SAT-clause Hamiltonians, the final step is to show that curing every member of the family of $H_{\mathrm{3SAT}}$ Hamiltonians by applying Hadamards as necessary, i.e., finding $x$ such that  
\beq
H'=W(x) H_{\mathrm{3SAT}}W(x)
\eeq
is stoquastic for every $H_{\mathrm{3SAT}}$, is equivalent to solving the NP-complete problem of finding satisfying assignments $x$ for the corresponding 3SAT instances. To prove the equivalence we show (i) that satisfying a 3SAT instance implies that the corresponding $H_{\mathrm{3SAT}}$ is cured, and (ii) that if $H_{\mathrm{3SAT}}$ is cured this implies that the corresponding 3SAT instance is satisfied.

(i) Note that any assignment $x$ that satisfies the given 3SAT instance also satisfies each individual clause. It follows from the bijection we have established that such an assignment cures each corresponding 3SAT-clause Hamiltonian individually. 
The stoquasticity of $H'$ then follows by noting that the tensor product of a stoquastic Hamiltonian with the identity matrix is still stoquastic and the sum of stoquastic Hamiltonians is stoquastic. 

(ii) We prove that an unsatisfied 3SAT instance implies that the corresponding $H_{\mathrm{3SAT}}$ is not cured. 
It suffices to focus on a particular clause $C_{ijk}^{(\alpha\beta\gamma)}$. The choice of variables that makes this clause False rotates the corresponding 3SAT-clause Hamiltonian to one that contains a non-stoquastic $+X_i X_j X_j$ term, which generates positive off-diagonal elements in specific locations in the matrix representation of $H_{\mathrm{3SAT}}$. Since no other 3SAT-clause Hamiltonian in $H_{\mathrm{3SAT}}$ contains an identical $X_i X_j X_k$ term, these positive off-diagonal elements cannot be cancelled out or made negative regardless of the choice of the other variables in the assignment. The rotated Hamiltonian is therefore guaranteed to be non-stoquastic. 

This establishes that the problem is NP-Hard. Checking whether a given $U$ cures all the local terms $\{H_a \}$ is efficient and therefore the problem is NP-complete.  $\qedsymbol$

To complete the proof of Theorem~\ref{thm1}, let us consider the modified Hamiltonian
\beq
\label{eq:tildeH}
\tilde{H}_{\mathrm{3SAT}}=H_{\mathrm{3SAT}} + c H_0\ , \quad H_0= - \sum_{i=1}^{n}(X_i + Z_i)\ ,
\eeq
where $c=O(1)M$ and $H_0$ is manifestly stoquastic%
~\footnote{In general, $M$ can grow polynomially (usually linearly for hard instances) with $n$, but $\tilde{H}_{\mathrm{3SAT}}$ remain $k$-local since the operator norm of each of its terms is $\textrm{poly}(n)$, as required by the definition of a $k$-local Hamiltonian (see, e.g., Definition~1 of Ref.~\cite{Cubitt:2016vl}).}. 
The goal is to find a unitary $U=\bigotimes_{i=1}^n u_i$ with $u_i \in \mathcal{C}_1$ that cures $\tilde{H}_{\mathrm{3SAT}}$. Note first that any choice of $U=\otimes_{i=1}^{n} W_i^{x_i}$ that cures $H_{\mathrm{3SAT}}$ also cures $\tilde{H}_{\mathrm{3SAT}}$. Second, note that choosing any $u_i$ that keeps $H_0$ stoquastic is equivalent to choosing one of the elements of $\mathcal{C}'_1\equiv \{I,X,W,XW\}\subset\mathcal{C}_1$ (e.g., the phase gate, which is an element of $\mathcal{C}_1$, maps $X$ to $Y$ so is excluded, as is $WX$, which maps $Z$ to $-X$). Therefore, by choosing $c$ to be large enough,
 any choice of $u_i\in \mathcal{C}_1\setminus\mathcal{C}'_1$ would transform $\tilde{H}_{\mathrm{3SAT}}$ into a non-stoquastic Hamiltonian. It follows that if $u_i \in \mathcal{C}_1$ and is to cure $\tilde{H}_{\mathrm{3SAT}}$ then in fact it must be an element of $\mathcal{C}'_1$.

Next, we note that conjugating a matrix by a tensor product of $X$ or identity operators only shuffles the off-diagonal elements but never changes their values (for a proof see the SI, Sec.~\ref{SI3}).
Therefore, for the purpose of curing a Hamiltonian, applying $X$ is equivalent to applying $I$ and applying $XW$ is equivalent to applying $W$. With this observation, the set of operators that can cure a Hamiltonian is effectively reduced from $\{I,W,XW,X\}$ to $\{I,W\}$.  According to Lemma~\ref{lemmaIW}, deciding whether such a curing transformation exists is NP-complete. $\qedsymbol$

\textit{Complexity of curing the sign problem for the single-qubit orthogonal group}.---%
We now consider the single-qubit orthogonal group, i.e., transformations of the form $Q=\bigotimes_{i=1}^n q_i$, where $q_i^T q_i = I$ $\forall i$. We first note that there exist Hamiltonians which cannot be cured in principle using any $Q$ (see the SI, Sec.~\ref{SI2}). Next, 
we show that the problem of curing the sign problem remains NP-complete when the set of allowed rotations is extended to arbitrary single-qubit orthogonal matrices. Namely:
\begin{theorem} 
\label{thm2}
Deciding whether there exists a curing orthogonal transformation $Q$ for $6$-local Hamiltonians is NP-complete.
\end{theorem}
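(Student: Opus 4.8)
\emph{Proof strategy.}---The plan is to reduce from the NP-complete problem of Lemma~\ref{lemmaIW}. Given a $3$-local instance $H_{\mathrm{3SAT}}$ of the form in Eq.~\eqref{eq:H3SAT}, I will build a $6$-local Hamiltonian $\hat{H}$ such that a single-qubit orthogonal $Q=\bigotimes_i q_i$ cures $\hat{H}$ if and only if some $W(x)$, $x\in\{0,1\}^n$, cures $H_{\mathrm{3SAT}}$. The new ingredient relative to the Clifford case of Theorem~\ref{thm1} is a family of ``pinning'' terms that collapse the continuous freedom in each $q_i$ onto an effectively discrete set. The key elementary fact is that conjugating a real single-qubit operator $aX_i+bZ_i+cI_i$ by $q_i$ only rotates or reflects the vector $(a,b)$ in the $X$--$Z$ plane (and sends $Y_i\to\pm Y_i$, which is immaterial for real Hamiltonians); hence the constraint ``off-diagonal element $\le 0$'' imposed by a single stoquastic single-qubit term is a half-plane condition on $(a,b)$, i.e., a closed arc of admissible angles for $q_i$, and a finite collection of such conditions always intersects in an arc of positive length. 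No set of single-qubit anchors can therefore pin $q_i$ to a discrete set; this forces the use of genuinely multi-qubit gadgets and is what raises the locality from $3$ to $6$.

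I would proceed in the following steps. \textbf{(1) Pinning.} Encode each logical variable in a two-qubit register and adjoin, with a large coefficient $c=O(1)M$, a fixed stoquastic gadget $c\,G_i$ supported on that register, of locality at most $6$. The building blocks are two-qubit couplings of the form $-(X_aX_b\pm Z_aZ_b)$: since such a term carries off-diagonal matrix elements of \emph{both} signs at distinct locations, demanding that it remain stoquastic imposes a rigid, equality-type relation between the $X$--$Z$-plane actions of $q_a$ and $q_b$, not a mere inequality. Combining a few such couplings with manifestly stoquastic single-qubit anchors $-(X_i+Z_i)$ (which are Hadamard-invariant and therefore do not obstruct the forward direction), I expect to force the curing-relevant part of each $q_i$ into the finite set $\mathcal{C}'_1=\{I,X,W,XW\}$ already encountered at the end of the proof of Theorem~\ref{thm1}. \textbf{(2) Collapse.} Using the SI result (Sec.~\ref{SI3}) that conjugating by a tensor product of $X$ and identity operators only permutes off-diagonal elements without changing their values, together with the $Y_i\to\pm Y_i$ freedom, this finite set is curing-equivalent to $\{I,W\}$, exactly as in the reduction $\{I,W,X,XW\}\to\{I,W\}$ of Theorem~\ref{thm1}. \textbf{(3) Clause terms.} Lift the $3$-local clause Hamiltonians $H^{(\alpha\beta\gamma)}_{ijk}$ of Eqs.~\eqref{eq:Hijk111}--\eqref{eq:Hijkabc} to act on the three corresponding two-qubit registers; this is where the factor of two in the locality appears. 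Set $\hat{H}=\bigl(\text{lifted }H_{\mathrm{3SAT}}\bigr)+c\sum_i G_i$. \textbf{(4) Equivalence.} In the forward direction a satisfying assignment $x$ yields an explicit curing transformation (the appropriate tensor of $I$'s and $W$'s on the registers), curing every lifted clause term and every pinning gadget at once. In the reverse direction, the large $c$ forces each $q_i$ to be curing-equivalent to $W_i^{x_i}$ for some bit $x_i$, and then the argument of Lemma~\ref{lemmaIW}---a falsified clause produces a $+X_iX_jX_k$-type term whose positive off-diagonal entries occupy matrix positions untouched by any other term and hence cannot be cancelled---shows $\hat{H}$ is not cured. \textbf{(5) NP.} As in every construction in the paper, a candidate $Q$ is verified in polynomial time by checking stoquasticity of each of the $\mathrm{poly}(n)$ local terms of $\hat{H}$ separately, so the problem is in NP and hence NP-complete. (The existence of Hamiltonians that no $Q$ can cure, from the SI Sec.~\ref{SI2}, shows the decision problem is non-trivial.)

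The step I expect to be the main obstacle is (1): producing an \emph{explicit} low-locality stoquastic gadget whose set of stoquasticity-preserving single-qubit orthogonal conjugations is provably \emph{exactly} the intended finite set. One has to rule out accidental admissible transformations---reflections as well as rotations through unintended angles---and verify that the equality-type constraints from the two-qubit couplings, intersected with the arc constraints from the anchors, cut out isolated points rather than a positive-dimensional family; this is the continuous-group analogue of the step in Theorem~\ref{thm1} where the discreteness of $\mathcal{C}_1$ does the work for free. A secondary subtlety is compatibility: the transformation realizing a satisfying assignment must keep \emph{all} terms stoquastic simultaneously, and the ``falsified clause is fatal'' argument must survive the presence of the auxiliary qubits introduced for pinning.
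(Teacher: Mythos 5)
Your architecture matches the paper's: double each logical qubit into a two-qubit register, lift the clause Hamiltonians of Lemma~\ref{lemmaIW} to $6$-local form, add a strong stoquastic pinning term, reduce reflections to rotations via the $X$-shuffle lemma of Sec.~\ref{SI3}, and finish with the ``falsified clause leaves an uncancellable $+\bar X_i\bar X_j\bar X_k$'' argument. However, the step you flag as the main obstacle is not a technicality --- it is the entire content of the theorem, and the gadget you sketch does not close it. A symmetric coupling $-(X_aX_b+Z_aZ_b)$ conjugated by $R(\theta_a)\otimes R(\theta_b)$ produces a cross term $X_aZ_b$ proportional to $\sin 2(\theta_a-\theta_b)$ and an $X_aX_b$ term proportional to $-\cos 2(\theta_a-\theta_b)$; stoquasticity therefore forces only $\theta_a=\theta_b \pmod{\pi}$ and leaves the common angle entirely free. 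Your single-qubit anchor $-(X_i+Z_i)$ then restricts that common angle merely to the closed arc where $\cos 2\theta+\sin 2\theta\ge 0$, exactly the half-plane condition you yourself identify as insufficient. The intersection is a positive-dimensional family, so the reverse direction of your reduction fails: there are curing rotations that do not correspond to any bit assignment.

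The paper's fix is to make the two-qubit pinning coupling \emph{asymmetric}: $H_0'=-\sum_i (2\bar Z_i+\bar X_i)$. Writing the full Hamiltonian on a register as $Z_1Z_2\otimes H_z+X_1X_2\otimes H_x+cH_{12}'\otimes I+I\otimes H_I$ with $H_{12}'=-2Z_1Z_2-X_1X_2$, the vanishing of the two cross terms $X_1Z_2$ and $Z_1X_2$ gives two linear conditions in $\sin 2\theta_1\cos 2\theta_2$ and $\cos 2\theta_1\sin 2\theta_2$ with coefficient matrices $H_z-2cI$ and $H_x-cI$; because the weights $2c$ and $c$ differ (and a trace argument shows the clause contributions to $H_z,H_x$ cannot restore the symmetry), both products must vanish separately, collapsing the continuum to $(\theta_1,\theta_2)\in\{0,\pi/2\}^2\cup\{\pm\pi/4\}^2$, and the sign of the surviving $X_1X_2$ and $Z_1Z_2$ terms then cuts this down to the four diagonal points. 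This unequal weighting is the missing idea; with equal weights, as in your proposal, the two conditions are proportional and only constrain the angle difference. A secondary point: no extra ancilla qubits are needed --- the pinning term lives on the same two-qubit registers as the lifted clauses, which is why the compatibility issues you worry about in the forward direction do not arise.
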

\begin{proof}
We relegate some details to the SI (Sec.~\ref{SI4}). The proof builds on that of Theorem~\ref{thm1}, but to deal with the richer set of rotations---which is now a continuous group---each $Z$, $X$, and $W$ is promoted to a two-qubit operator: 
$Z_i\mapsto \bar{Z}_i \equiv Z_{2i-1}Z_{2i}$, $X_i\mapsto \bar{X}_i \equiv X_{2i-1}X_{2i}$, $W_i^{{\alpha}} \mapsto \bar{W}_i^{{\alpha}}\equiv W_{2i-1}^{{\alpha}}  \otimes W_{2i}^{{\alpha}}$, giving rise to $6$-local Hamiltonians. Thus Eq.~\eqref{eq:Hijk111} becomes
\bea
\bar{H}_{ijk}^{(111)} =\bar{Z}_i \bar{Z}_j \bar{Z}_k &-& 3 (\bar{Z}_i +\bar{Z}_j +\bar{Z}_k)\nonumber \\
&-&(\bar{Z}_i \bar{Z}_j+ \bar{Z}_i \bar{Z}_k+\bar{Z}_j \bar{Z}_k) \ .
\label{eq:H'ijk111}
\eea
Let $\bar{W}(x)\equiv\bigotimes_{i=1}^{n} \bar{W}_i^{x_i}$. It is again straightforward to check that $\bar{W}(x) \bar{H}_{ijk}^{(111)}\bar{W}^{\dagger}(x)$
is stoquastic for any combination of the binary variables $(x_i,x_j,x_k)$ except for $(1,1,1)$.

Likewise, let $H_{ijk}^{(\alpha\beta\gamma)}\mapsto \bar{H}_{ijk}^{(\alpha\beta\gamma)}$ and ${H}_{\mathrm{3SAT}}\mapsto \bar{H}_{\mathrm{3SAT}}= \sum_{C} \bar{H}_{ijk}^{(\alpha\beta\gamma)}$ [generalizing Eqs.~\eqref{eq:Hijkabc} and \eqref{eq:H3SAT}], where $C$ again denotes the corresponding set of clauses in a 3SAT instance, constructed just as in the proof of Lemma~\ref{lemmaIW}. This, again, establishes a bijection between 3SAT clauses and `3SAT-clause Hamiltonians,' now of the form $\bar{H}_{\mathrm{3SAT}}$. 

In lieu of Eq.~\eqref{eq:tildeH} we consider
\bea
\tilde{\bar{H}}_{\mathrm{3SAT}}=\bar{H}_{\mathrm{3SAT}} + c  H'_{0} \ , \quad H'_{0}=  -\sum_{i=1}^{n} 2 \bar{Z}_i + \bar{X}_i\ ,
\label{eq:tildebarH3SAT}
\eea
where $c = O(1)$. 
Just as in the proof of Theorem~\ref{thm1} we prove that (i) any satisfying assignment of a 3SAT instance provides a curing $Q$ for the corresponding Hamiltonian $\tilde{\bar{H}}_{\mathrm{3SAT}}$, and (ii) any $Q$ that cures $\tilde{\bar{H}}_{\mathrm{3SAT}}$ provides a satisfying assignment for the corresponding 3SAT instance. Because of the relation between a single-qubit orthogonal matrix and a single qubit rotation it suffices to prove the hardness only for pure rotations (see the SI, Sec.~\ref{SI4-a}); we let $R(\theta_i)=
\begin{bmatrix}
\cos{\theta_i} & - \sin{\theta_i}\\ 
 \sin{\theta_i} &  \cos{\theta_i}
\end{bmatrix}$ denote a rotation by angle $\theta_i$.

(i) Let $P(x)$ denote the product of $2n$ single-qubit rotations such that if $x_i=0$ then qubits $2i-1$ and $2i$ are unchanged, or if $x_i=1$ then they are both rotated by $R(\frac{\pi}{4})=XW$. Just as in the proof of Theorem~\ref{thm1}, if the 3SAT instance has a satisfying assignment $x^*$ then $P(x^*) \tilde{\bar{H}}_{\mathrm{3SAT}} P^T(x^*)$ is stoquastic (see the SI, Sec.~\ref{SI4-b}). 

(ii) We need to prove that any rotation that cures $\tilde{\bar{H}}_{\mathrm{3SAT}}$ provides a satisfying assignment for the corresponding 3SAT instance. We do this in two steps: (a) In the SI (Sec.~\ref{SI4-c}) we prove a lemma showing that for any $\tilde{\bar{H}}_{\mathrm{3SAT}}$, any curing rotation $R=\bigotimes_{i=1}^{2n} R(\theta_i)$ has to satisfy the condition that $(\theta_{2i-1},\theta_{2i}) \in \{ (\frac{\pi}{2},\frac{\pi}{2}),(\frac{\pi}{4},\frac{\pi}{4}),(0,0), (\frac{-\pi}{4},\frac{-\pi}{4})\}$ $ \forall i$. This is the crucial step, since it reduces the problem from a continuum of angles to a discrete set. If $(\theta_{2i-1},\theta_{2i})\in\{(0,0),(\frac{\pi}{2},\frac{\pi}{2})\}$ we assign $x_i=0$, while if $(\theta_{2i-1},\theta_{2i})\in\{(-\frac{\pi}{4},-\frac{\pi}{4}),(\frac{\pi}{4},\frac{\pi}{4})\}$ we assign $x_i=1$, since rotations with the angles in each pair have the same effect. (b) If $R$ cures $\tilde{\bar{H}}_{\mathrm{3SAT}}$, $x=\{x_i\}$ satisfies the corresponding 3SAT instance. 
To see this we first note that if $R$ cures $\tilde{\bar{H}}_{\mathrm{3SAT}}$ it must cure all the clauses separately: Using step (a), we know that any such solution must be one of the four possible cases. Therefore, if $R$ were to cure $\tilde{\bar{H}}_{\mathrm{3SAT}}$ but does not cure one of the 3SAT-clause Hamiltonians, it would result in a $\bar{X}_{i}\bar{X}_{j}\bar{X}_{k}$ term in the corresponding clause. Since no other 3SAT-clause Hamiltonian in $\tilde{\bar{H}}_{\mathrm{3SAT}}$ contains an identical $\bar{X}_{i}\bar{X}_{j}\bar{X}_{k}$ term, these positive off-diagonal elements cannot be cancelled out or made negative regardless of the choice of the other variables in the assignment. Therefore if $R$ cures $\tilde{\bar{H}}_{\mathrm{3SAT}}$ it also necessarily separately cures all the terms in $\tilde{\bar{H}}_{\mathrm{3SAT}}$. By construction, if $R$ cures a term $\bar{H}_{ijk}^{(\alpha\beta\gamma)}$, the string $x$ satisfies the corresponding 3SAT clause $C_{ijk}^{(\alpha\beta\gamma)}$. Thus $x$ satisfies  all the clauses in the corresponding 3SAT instance.
The decision problem for the existence of $R$ (and hence $Q$) is therefore NP-hard. Given a unitary $U$ and a set of local terms $\{H_a\}$, verifying whether $U$ cures all of the terms is clearly efficient and therefore this problem is NP-complete.
\end{proof}

\textit{Implications}.---
An immediate and striking implication of Theorem~\ref{thm1} is that even under the promise that a non-stoquastic Hamiltonian can be cured by one-local Clifford unitaries (corresponding to trivial basis changes), the problem of actually finding this transformation is unlikely to have a polynomial-time solution.

An interesting implication of Theorem~\ref{thm2} 
is the possibility of constructing `secretly stoquastic' Hamiltonians.
I.e., one may generate stoquastic quantum many-body Hamiltonians $H_{\mathrm{stoq}}$, but present these in a `scrambled' non-stoquastic form $H_{\mathrm{nonstoq}}=U H_{\mathrm{stoq}} U^{\dagger}$ where $U$ is a tensor product of single-qubit orthogonal matrices (or in the general case a constant depth quantum circuit). We conjecture that the latter Hamiltonians will be computationally hard to simulate using QMC by parties that have no access to the `descrambling' circuit $U$.
In other words, it is possible to generate efficiently simulable spin models that might be inefficient to simulate unless one has access to the `secret key' to make them stoquastic. In the SI (Sec.~\ref{SI5}) we note that this observation may potentially have cryptographic applications.


Our work also has implications for the connection between the sign problem and the NP-hardness of a QMC simulation. A prevailing view of this issue associates the origin of the NP-hardness of a QMC simulation to the relation between a (`fermionic') Hamiltonian that suffers from a sign problem and the corresponding (`bosonic') Hamiltonian obtained by replacing every coupling coefficient by its absolute value (e.g.,  $\sum_{ij} {J}_{ij} X_iX_j \mapsto \sum_{ij} |J_{ij} | X_i X_j$)
\footnote{For example in Ref.~\protect\cite{Troyer2005} a solution to sign problem is defined as: \protect\textquotedblleft an algorithm of polynomial complexity to evaluate the thermal average $\protect\langle A \protect\rangle$\protect\textquotedblright ... \protect\textquotedblleft [f]or a quantum system that suffers from a sign problem for an observable $A$, and for which there exists a polynomial complexity algorithm for the related bosonic system\protect\textquotedblright.}.
The view we advocate here is that a solution to the sign problem is to find an efficiently computable curing transformation that removes it but has the same physics (in general the fermionic and bosonic versions of the same Hamiltonian do not), i.e., conserves thermal averages \footnote{
Consider the following example. (i) $H_X = \protect\sum_{ij} {J}_{ij} X_iX_j$, with ${J}_{ij}$ randomly chosen from the set $\protect\{0,\protect\pm J\protect\}$ on a three-dimensional lattice, has a sign problem. (ii) Deciding whether its ground state energy is below a given bound is NP-complete~\cite{Barahona1982}. (iii) Deciding the same for its bosonic and sign-problem-free version $H_{|X|} = \protect\sum_{ij} |J_{ij}| X_i X_j$ is in BPP (classical polynomial time with bounded error) since this Hamiltonian is that of a simple ferromagnet. The conclusion drawn in Ref.~\cite{Troyer2005} was that since the bosonic version is easy to simulate, the sign problem is the origin of the NP-hardness of a QMC simulation of this model ($H_X$). However, note that 
computing thermal averages via a QMC simulation of $H_X$ is the same as for $H_Z =W^{\protect\otimes n} H_{X} W^{\protect\otimes n}= \protect\sum_{ij} {J}_{ij} Z_i Z_j$, which is stoquastic and has no sign problem. Thus, the sign problem of $H_X$ is efficiently curable, after which (when it is presented as $H_Z$) deciding its ground state energy remains NP-hard.}.

\textit{Conclusions and open questions}.---%
We have proposed a new definition of stoquasticity (or absence of the sign problem) 
of quantum many-body Hamiltonians that is motivated by computational complexity considerations. We discussed the circumstances under which non-stoquastic Hamiltonians can in fact be made stoquastic by the application of single-qubit rotations and in turn potentially become efficiently simulable by QMC algorithms. We have demonstrated that finding the required rotations is computationally hard when they are restricted to the one-qubit Clifford group or one-qubit continuous orthogonal matrices.

These results raise multiple questions of interest.
It is important to clarify the computational complexity of finding the curing transformation in the case of constant-depth circuits that also allow two-body rotations, whether discrete or continuous. Also, since our NP-completeness proof involved $3$ and $6$-local Hamiltonians, it is interesting to try to reduce it to $2$-local building blocks. Another direction into which these results can be extended is to relax the constraints on the off-diagonal elements and require that they are smaller than some small $\epsilon>0$. This is relevant when some small positive off-diagonal elements can be ignored in a QMC simulation. 
 

Finally, it is natural to reconsider our results from the perspective of  quantum computing. Namely, for non-stoquastic Hamiltonians that are curable, do there exist quantum algorithms that cure the sign problem more efficiently than is possible classically? 
With the advent of quantum computers, specifically quantum annealers, it may be the case that these can be used as quantum simulators, and as such they will not be plagued by the sign problem. Will such physical implementations of quantum computers offer advantages over classical computing even for problems that are incurably non-stoquastic?
We leave these as open questions to be addressed in future studies.

\textit{Acknowledgments}.---
The research is based upon work (partially) supported by the Office of
the Director of National Intelligence (ODNI), Intelligence Advanced
Research Projects Activity (IARPA), via the U.S. Army Research Office
contract W911NF-17-C-0050. The views and conclusions contained herein are those of the authors and should not be interpreted as necessarily representing the official policies or endorsements, either expressed or implied, of the ODNI, IARPA, or the U.S. Government. The U.S. Government is authorized to reproduce and distribute reprints for Governmental purposes notwithstanding any copyright annotation thereon.
We thank Ehsan Emamjomeh-Zadeh, Iman Marvian, Evgeny Mozgunov, Ben Reichardt, and Federico Spedalieri for useful discussions.

\bibliography{refs}

\onecolumngrid

\newpage

\appendix
\section*{Supplemental Information}

\section{Grouping terms (without changing the basis)} 
\label{SI1}

As discussed in the main text, one ambiguity in the definition of stoquastic Hamiltonians is in the choice of the set $\{H_a\}$. With this motivation, and ignoring the freedom in choosing a basis, we address the following question.

\textit{Problem:} We are given the $k$-local $H=\sum_a{H_a}$, i.e., each $H_a$ acts nontrivially on at most $k$ qubits. In the same basis (without any rotation), find a new set $\{H'_a\}$ satisfying $H=\sum_a{H'_a}$, where each $H'_a$ is $k'$-local and stoquastic (if such a set exists).

Obviously, if the total Hamiltonian is stoquastic then considering the total Hamiltonian as one single Hamiltonian is a valid solution with $k'=n$. This description of the Hamiltonian requires a $2^n \times 2^n$ matrix. We would prefer a $k'$-local Hamiltonian, i.e., a set consisting of a polynomial number of terms, each $2^{k'} \times 2^{k'}$ where $k'$ is a constant independent of $n$.

\textit{Solution:} One simple strategy is to consider any $k'$-local combination of qubits, and to try to find a grouping that makes all of these ${n \choose k'}$ terms stoquastic. To do so, for any $k'$-local combination of qubits we generate a set of inequalities. First, for a fixed combination of qubits, we add the terms in $H=\sum_a{H_a}$ that act nontrivially only on those $k'$ qubits, each with an unknown weight that will be determined later. Then we write down the conditions on the weights to ensure that all the off-diagonal elements are non-positive. This is done for all the ${n \choose k'}$ combinations to get the complete set of linear inequalities.  By this procedure, the problem reduces to finding a feasible point for this set of linear inequalities, which can be solved efficiently. (In practice one can use linear programming optimization tools to check whether such a feasible point exists.) When there is no feasible point for a specific value of $k'$ we can increase the value of $k'$ and search again.
 
\textit{Example:} Assume we are given $H=Z_1 X_2 -2 X_2+ X_2 Z_3$ and the goal is to find a stoquastic description with $k'=2$. We combine the terms acting on qubits $1$ and $2$ and then the terms acting on qubits $2$ and $3$ (there is no term on qubits $1$ and $3$). We construct $h_{1,2}= \alpha_1 Z_1 X_2 + \alpha_2 (-2 X_2)$ and $h_{2,3}= \alpha_3 (-2 X_2)+ \alpha_4 X_2 Z_3$. There are two types of constraints: (1) constraints enforcing $H=h_{1,2}+h_{2,3}$:
\bea
\alpha_1=\alpha_4 =1, \alpha_2+\alpha_3=1\ ,
\eea
and (2) constraints from stoquasticity of each of the two Hamiltonians: 
\bea
1-2\alpha_2 \leq 0, -1-2\alpha_2 \leq 0 ; \\1-2\alpha_3 \leq 0, -1-2\alpha_3 \leq 0.
\eea
Simplifying these inequalities we have $.5 \leq \alpha_2,\alpha_3$ and $\alpha_2+\alpha_3=1$, which clearly has only one feasible point: $\alpha_2=\alpha_3=.5$. The corresponding terms are : $H'_1=h_{1,2}=Z_1 X_2 - X_2$ and $H'_2=h_{2,3}=- X_2+ X_2 Z_3$. Both of these terms are stoquastic and they satisfy $H=\sum_a{H'_a}$.

\section{Curing using Pauli operators}
\label{SI:Pauli}

In Appendix~\ref{SI3} below, we show that conjugating a Hamiltonian by a tensor product of Pauli $X$ operators or identity operators only shuffles the off-diagonal elements without changing their values.
Recalling that $Y=iXZ$, we thus conclude that choosing between Pauli operators to cure a Hamiltonian is equivalent to choosing between $I$ and $Z$ operators. Therefore, given local terms of a $k$-local Hamiltonian $\{H_a\}$ as input, the goal is to find a string $x=(x_1,\dots,x_n)$ such that $U=\otimes_{i=1}^n Z^{x_i}$ cures each of the local terms $\{H_a\}$ separately.

Each multi-qubit Pauli operator in $H_a$ can be decomposed into $X$ components and $Z$ components. We group all the terms in each $H_a$ that share the same $X$ component. For example if $H_a$ includes $Y_1Y_2,3X_1X_2,$ and $X_1X_2Z_3$, we combine them into one single term $X_1 X_2 (-Z_1 Z_2+3+Z_3)$. Conjugating this term with $U$ yields $(-1)^{x_1+ x_2} X_1 X_2 (-Z_1 Z_2+3+Z_3)$. As terms with different $X$ components do not correspond to overlapping off-diagonal elements in $H_a$, the combined $Z$ part fixes a constraint on $\{x_i\}$ based on the positivity or negativity of all its elements (if the combined $Z$ part has both positive and negative elements we conclude that there is no $U$ that can cure the input $H$). In this example, $(-1)^{x_1+ x_2} X_1 X_2 (-Z_1 Z_2+3+Z_3)$ becomes stoquastic iff $x_1+ x_2 \equiv 1 \mod 2$. 

We combine all these linear equations in mod $2$ that are generated from terms with different $X$ components, and solve for a satisfying $x$. This can be done efficiently, e.g., using Gaussian elimination. The absence of a consistent solution implies the absence of a curing Pauli group element.

As the dimension of each of the local terms $\{H_a\}$ is independent of the number of qubits $n$, and there are at most $\textrm{poly}(n)$ number of these terms~\cite{Cubitt:2016vl}, the entire procedure takes $\textrm{poly}(n)$ time.

\section{Single-qubit rotations are not enough to cure every Hamiltonian}
\label{SI2}

Consider, e.g., the three Hamiltonians
\beq
H_{ij}=Z_i Z_j + X_i X_j\ , \qquad (i,j) \in \{ (1,2), (2,3), (3,1)\}\ .
\eeq
The sum of any pair of these Hamiltonians can be cured by single-qubit unitaries (e.g., $H_2=H_{1,2}+H_{2,3}$ can be cured by applying $U=Z_2$). 

In contrast, the frustrated Hamiltonian $H=H_{12}+H_{23}+H_{13}$ cannot be cured using any combination of single-qubit rotations.
To see this, we first note that the partial trace of a stoquastic Hamiltonian is necessarily stoquastic. By partial trace over single qubits of $H$, we conclude that in order for $H$ to be stoquastic, all three $H_{ij}$'s must be stoquastic. To find all the solutions that convert each of these Hamiltonians to a stoquastic Hamiltonian, we expand
$R_i(\theta_i)\otimes R_j(\theta_j) H_{ij} R_i^{T}(\theta_i)\otimes R_j^{T}(\theta_j)$
and note that it has  $\pm \sin{2 (\theta_i-\theta_j)}$ and  $\cos{2 (\theta_i-\theta_j)}$ as off-diagonal elements. Demanding that the rotated Hamiltonians are all stoquastic [so that $\sin{2 (\theta_i-\theta_j)}=0$] forces $\cos{2 (\theta_i-\theta_j)}=-1$ $\forall (i,j) \in \{ (1,2), (2,3), (3,1)\}$. 
But this set of constraints does not have a feasible point. To see this note that 
\bea
\cos{2 (\theta_1-\theta_3)}=\cos{2 (\theta_1-\theta_2)} \cos{2 (\theta_1-\theta_3)} - \sin{2 (\theta_1-\theta_2)} \sin{2 (\theta_1-\theta_3)} =-1 \times -1 + 0\times 0=+1\ .
\eea
Therefore $H$ cannot be made stoquastic using $2\times 2$ rotational matrices. Because of the relation between rotation and orthogonal matrices (see Appendix~\ref{SI4-a} below), we conclude that $H$ cannot be made stoquastic using $2\times 2$ orthogonal matrices.

\section{Conjugation by a product of $X$ operators}
\label{SI3}

Here we show that conjugating a Hamiltonian by a tensor product of $X$'s or identity operators only shuffles the off-diagonal elements without changing their values. 
\begin{lemma}
	Let $U_X= X^{a_1} \otimes \dots \otimes X^{a_n}$, where $a_i\in \{0,1\}$. The set of off-diagonal elements of a general $2^n\times 2^n$ matrix $B$, $\mathrm{OFF}(B)=\{[B]_{ij}| i,j \in \{0,1\}^n, i\neq j \}$, is equal to the set of off-diagonal elements of $U_X B U_X$ for all possible $\{a_i\}$. 
\end{lemma}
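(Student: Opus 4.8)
The plan is to show that conjugation by $U_X$ is a permutation of the computational basis, and that permuting basis vectors merely relabels matrix entries, sending off-diagonal entries to off-diagonal entries bijectively. First I would recall that $X\ket{0}=\ket{1}$ and $X\ket{1}=\ket{0}$, so $X^{a}\ket{b}=\ket{b\oplus a}$ for $a,b\in\{0,1\}$, and hence $U_X\ket{i}=\ket{i\oplus a}$ for any bit string $i\in\{0,1\}^n$, where $a=(a_1,\dots,a_n)$ and $\oplus$ is bitwise XOR. Thus $U_X$ implements the basis permutation $\pi_a:i\mapsto i\oplus a$, which is an involution ($\pi_a^2=\mathrm{id}$) and in particular a bijection on $\{0,1\}^n$.

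Next I would compute the matrix elements of the conjugated operator directly. Since $U_X=U_X^\dagger=U_X^{-1}$ (each $X$ is Hermitian and unitary, and it is real so $U_X^T=U_X$), we have
\beq
[U_X B U_X]_{ij} = \bra{i}U_X B U_X\ket{j} = \bra{i\oplus a}B\ket{j\oplus a} = [B]_{(i\oplus a)(j\oplus a)}\ .
\eeq
This exhibits the entrywise action explicitly: conjugation by $U_X$ relocates the entry $[B]_{kl}$ to position $(k\oplus a,\ l\oplus a)$. The map $(k,l)\mapsto(k\oplus a,l\oplus a)$ is a bijection on pairs of indices, so $U_X B U_X$ has exactly the same multiset of entries as $B$; nothing is created, destroyed, or rescaled.

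Finally I would observe that this bijection preserves the diagonal/off-diagonal distinction: $i=j$ if and only if $i\oplus a=j\oplus a$, equivalently the entry at position $(k,l)$ is diagonal ($k=l$) if and only if its image position $(k\oplus a, l\oplus a)$ is diagonal. Hence the restriction of the bijection to off-diagonal positions is itself a bijection of the off-diagonal positions, and it follows that $\mathrm{OFF}(U_X B U_X) = \mathrm{OFF}(B)$ as sets (indeed as multisets) for every choice of $\{a_i\}$. Taking $a=0$ recovers $B$ itself, so the statement holds for all configurations including the identity.

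There is no real obstacle here: the only thing to be careful about is the direction of conjugation and the fact that $U_X$ equals its own inverse (so one may write $U_X B U_X$ rather than $U_X B U_X^\dagger$ without changing anything), and that the argument is about the \emph{set} (or multiset) of values, not about fixing individual matrix positions — the positions are permuted, but the collection of values, and in particular the collection of off-diagonal values, is invariant. If one wants the multiset statement (useful for the ``cannot be cancelled'' arguments in the main text) it comes for free from the same bijection, since a bijection on positions preserves multiplicities of values.
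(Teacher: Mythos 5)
Your proof is correct and follows essentially the same route as the paper's: compute $[U_X B U_X]_{ij}=[B]_{(i\oplus a)(j\oplus a)}$, note that the index permutation $i\mapsto i\oplus a$ is a bijection preserving the condition $i\neq j$, and conclude that the set of off-diagonal elements is unchanged. Your additional remark that the statement holds at the level of multisets (not just sets) is a harmless strengthening that the paper implicitly relies on elsewhere.
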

\begin{proof}
	Let $a=(a_1,...,a_n)$. Similarly, let $i$ and $j$  represent $n$-bit strings. The elements of the matrix $U_X B U_X$ are:
\begin{align}
	\begin{split}
	\bra{i}U_X B U_X\ket{j}&=\bra{i_1,\dots,i_n}U_X B U_X\ket{j_1,\dots,j_n}\\
	&=\bra{i_1 \odot a_1 ,\dots,i_n\odot a_1} B \ket{j_1\odot a_1,\dots,j_n\odot a_1}=\bra{i\odot a}B \ket{j\odot a}
	\end{split}
\end{align}
where $\odot$ denotes the XOR operation. Clearly for any fixed $a$, we have  $i \neq j \iff i\odot a \neq j\odot a$ and therefore:
\begin{align}
	\begin{split}
\mathrm{OFF}(U_X B U_X)&=\{[B]_{i\odot a,j\odot a}| i,j \in \{0,1\}^n, i\neq j \}
=\{[B]_{i\odot a,j\odot a}| i\odot a,j\odot a \in \{0,1\}^n, i\odot a\neq j\odot a \} \\
&=\{[B]_{i'j'}| i',j' \in \{0,1\}^n, i'\neq j' \}=\mathrm{OFF}(B)
	\end{split}
\end{align}
\end{proof}
A similar argument proves that $U_X$  shuffles the diagonal elements: $\mathrm{DIA}(U_X B U_X)=\mathrm{DIA}(B)$.

Therefore, conjugating a Hamiltonian by $U_X$ does not change the set of inequalities  one needs to solve to make a Hamiltonian stoquastic. 
As a consequence, whenever we want to pick  $u_i$ as a solution, we can instead choose $Xu_i$.

\section{Completing the proof of Theorem~\ref{thm2}}
\label{SI4}

We provide the details needed to complete the proof of Theorem~\ref{thm2} as referenced in the main text. We first restate the construction and Theorem~\ref{thm2} for the reader's convenience.

The 6-local Hamiltonian $\tilde{\bar{H}}_{\mathrm{3SAT}}$  is defined as 
\bea
\tilde{\bar{H}}_{\mathrm{3SAT}}= \sum_{C} \bar{H}_{ijk}^{(\alpha\beta\gamma)}+ c  H'_{0} \ , \quad H'_{0}=  -\sum_{i=1}^{n} 2 \bar{Z}_i + \bar{X}_i\ ,
\label{eq:tildebarH3SAT}
\eea
with $c = O(1)$. Letting $\bar{Z}_i \equiv Z_{2i-1}Z_{2i}$, $\bar{X}_i \equiv X_{2i-1}X_{2i}$, and $\bar{W}_i^{{\alpha}}\equiv W_{2i-1}^{{\alpha}}  \otimes W_{2i}^{{\alpha}}$ we define 
\bea
\bar{H}_{ijk}^{(111)} =\bar{Z}_i \bar{Z}_j \bar{Z}_k &-& 3 (\bar{Z}_i +\bar{Z}_j +\bar{Z}_k)\nonumber \\
&-&(\bar{Z}_i \bar{Z}_j+ \bar{Z}_i \bar{Z}_k+\bar{Z}_j \bar{Z}_k) \ ,
\label{eq:restateH'ijk111}
\eea

and 

\beq
\label{eq:restatebarHijkabc}
\bar{H}_{ijk}^{(\alpha\beta\gamma)} =   
\bar{W}_i^{\bar{\alpha}} \otimes \bar{W}_j^{\bar{\beta}} \otimes \bar{W}_k^{\bar{\gamma}} \cdot \bar{H}_{ijk}^{(111)} \cdot \bar{W}_i^{\bar{\alpha}} \otimes \bar{W}_j^{\bar{\beta}} \otimes \bar{W}_k^{\bar{\gamma}}\ .
\eeq
$\bar{H}_{ijk}^{(\alpha\beta\gamma)}$ is a clause Hamiltonian corresponding to a clause in the 3SAT instance. Defining  $\bar{W}(x)\equiv\bigotimes_{i=1}^{n} \bar{W}_i^{x_i}$ it is  straightforward to check that $\bar{W}(x) \bar{H}_{ijk}^{(\alpha\beta\gamma)}\bar{W}(x)$ is stoquastic for any combination of the binary variables $(x_i,x_j,x_k)$ except for $(\alpha,\beta,\gamma)$.

In what follows we prove that deciding whether there exists a curing rotation $R=\bigotimes_{i=1}^{n} R(\theta_i)$ for $6$-local Hamiltonians is NP-complete. We do this by proving that: 

(i) Any satisfying assignment of a 3SAT instance provides a curing $R$ for the corresponding Hamiltonian $\tilde{\bar{H}}_{\mathrm{3SAT}}$, and 

(ii) Any $R$ that cures $\tilde{\bar{H}}_{\mathrm{3SAT}}$ provides a satisfying assignment for the corresponding 3SAT instance.

\textbf{Theorem 2 (restated).} Let $Q=\bigotimes_{i=1}^n q_i$, where each $q_i$ belongs to the single-qubit orthogonal group (i.e., $q_i^T q_i = I)$. Deciding whether there exists a curing orthogonal transformation $Q$ for $6$-local Hamiltonians is NP-complete.

\subsection{The relation between a single-qubit real-orthogonal matrix and a single qubit rotation}
\label{SI4-a}

The condition $q_{i} q^T_{i}=I$ forces each real-orthogonal matrix $q_i$ to be either a reflection or a rotation of the form:
\bea q_i=
\begin{bmatrix}
\cos{\theta_i} & a_i \sin{\theta_i}\\ 
 \sin{\theta_i} & -a_i \cos{\theta_i}
\end{bmatrix}
\eea
with $a_i =+1$ (a reflection) or $a_i =-1$ (a rotation). The operators $X$, $Z$, and Hadamard, are included in the family with $a_i=1$; $I$ and $iY=XZ$ are in the family with $a_i=-1$. Note that $\forall H, \forall \theta_i : \, q_i(\theta_i) H q_i^T(\theta_i)=q_i(\theta_i + \pi) H q_i^T(\theta_i + \pi)$. Therefore the angles that cure a Hamiltonian are periodic with a period of $\pi$. Hence, it suffices to consider the curing solutions only in one period: $\theta_i \in (\frac{-\pi}{2},\frac{+\pi}{2}]$.

Next, observe that a reflection by angle $\theta_i$ can be written as
\bea
 \begin{bmatrix}
\cos{\theta_i} &  \sin{\theta_i}\\ 
 \sin{\theta_i} & - \cos{\theta_i}
\end{bmatrix}\nonumber
=
X \begin{bmatrix} \cos{\frac{\pi}{2} - \theta_i} & - \sin{\frac{\pi}{2} - \theta_i}\\ 
 \sin{\frac{\pi}{2} - \theta_i} &  \cos{\frac{\pi}{2} - \theta_i} \end{bmatrix}
 =X  R(\frac{\pi}{2} - \theta_i)\ ,
 \eea
where 
\bea R(\theta_i)=
\begin{bmatrix}
\cos{\theta_i} & - \sin{\theta_i}\\ 
 \sin{\theta_i} &  \cos{\theta_i}
\end{bmatrix} \ .
\eea

As was discussed in Sec.~\ref{SI3}, if $X  R(\frac{\pi}{2} - \theta_i)$ is a curing operator so is $R(\frac{\pi}{2} - \theta_i)$. Therefore any curing $Q=\bigotimes_{i=1}^n q_i$ provides a curing $R=\bigotimes_{i=1}^n R(\theta_i)$. Hence, the NP-completeness of the decision problem for $R$ implies the NP-hardness of the decision problem for $Q$, which is the statement of Theorem~\ref{thm2}.

\subsection{If the 3SAT instance has a satisfying assignment $x^*$ then $P(x^*) \tilde{\bar{H}}_{\mathrm{3SAT}} P^T(x^*)$ is stoquastic}
\label{SI4-b}

Let 
\beq
P(x) \equiv\bigotimes_{i=1}^{n} \left(R(\frac{\pi}{4})^{x_i} \otimes  R(\frac{\pi}{4})^{x_i}\right)\ ,
\eeq 
where $x$ is a $n$-bit string $x=(x_1,\dots,x_n)$. Note that $R(\frac{\pi}{4})=XW$, and as we discussed in Appendix~\ref{SI3} it is equivalent to $W$ for curing.


To prove the claim, note that $x^*$ necessarily satisfies each individual clause of the 3SAT instance, and therefore makes the corresponding clause Hamiltonian stoquastic, i.e., $P(x^*) \bar{H}_{ijk}^{(\alpha\beta\gamma)} P^T(x^*)$  is stoquastic $\forall C_{ijk}^{(\alpha\beta\gamma)}$. Also, $P(x) H'_0 P^T(x)$ is clearly stoquastic for any $x$, where $H'_{0}=  -\sum_{i=1}^{n} 2 \bar{Z}_i + \bar{X}_i$ [Eq.~\eqref{eq:tildebarH3SAT}]. The stoquasticity of $P(x^*) \tilde{\bar{H}}_{\mathrm{3SAT}} P^T(x^*)$ then follows immediately.

\subsection{A useful lemma}
\label{SI4-c}
To prove the reverse direction, we first prove the following:

\begin{lemma} 
\label{lemmacont}

Let $\theta_1,\theta_2 \in (-\frac{\pi}{2},\frac{\pi}{2}]$. Consider the $2n$-qubit Hamiltonian
\bea
\label{eq:H'}
 H'=Z_1 Z_2 \otimes H_z + X_1 X_2 \otimes H_x + c H'_{12} \otimes I + I_1 I_2 \otimes H_I\ ,
\eea
where $H_z$, $H_x$, and $H_I$ are arbitrary Hamiltonians, 
 $c$ is an appropriately chosen constant, and 
\beq
H'_{12}= - 2 Z_1 Z_2 -  X_1 X_2\ .
\eeq
The only rotation $R(\theta_1) \otimes R(\theta_2)$ that can cure $H'$ has angles given by the following four points: 
\bea
(\theta_1,\theta_2) \in \left\{ (\frac{\pi}{2},\frac{\pi}{2}),(\frac{\pi}{4},\frac{\pi}{4}),(0,0), (\frac{-\pi}{4},\frac{-\pi}{4})\right\}.
\eea
\end{lemma}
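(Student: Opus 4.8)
\textbf{Proof proposal for Lemma~\ref{lemmacont}.}

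The plan is to compute the off-diagonal structure of $R(\theta_1)\otimes R(\theta_2)\, H' \, R^T(\theta_1)\otimes R^T(\theta_2)$ restricted to the first two qubits, treating $H_z$, $H_x$, $H_I$, and $H'_{12}$ as operator-valued coefficients. Since a partial trace (or, more carefully, inspection of the $2\times 2$ block indexed by the last $2n-2$ qubits) of a stoquastic Hamiltonian is stoquastic, it suffices to demand that every $4\times 4$ matrix obtained by conjugating $Z_1Z_2\otimes h_z + X_1X_2\otimes h_x + cH'_{12}\otimes I + I\otimes h_I$ be stoquastic for the relevant operator-values; but in fact the cleanest route is to track the two-qubit operators directly. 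First I would record how $R(\theta)\otimes R(\theta)$ acts by conjugation on the two-qubit operators $Z_1Z_2$ and $X_1X_2$. A direct computation gives that $R(\theta)^{\otimes 2}$ conjugation sends $Z_1Z_2 \mapsto \cos^2(2\theta)\,Z_1Z_2 + \sin^2(2\theta)\,X_1X_2 - \tfrac{1}{2}\sin(4\theta)\,(X_1Z_2+Z_1X_2)$ and similarly for $X_1X_2$, while $X_1Z_2+Z_1X_2$ maps into a combination of all three. The key simplification is that the dangerous off-diagonal contributions come from $X_1X_2$-type terms (which carry the ``bad'' sign) and from the cross terms $X_1Z_2, Z_1X_2$.

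The heart of the argument is the penalty term $cH'_{12}\otimes I = c(-2Z_1Z_2 - X_1X_2)\otimes I$. With $c$ chosen large enough that its rotated off-diagonal elements dominate those coming from $H_z\otimes\cdots$, $H_x\otimes\cdots$, $H_I\otimes\cdots$ (here one uses that each of $H_z,H_x,H_I$ has operator norm bounded by a constant independent of $n$, so a constant $c$ suffices), stoquasticity of the full rotated Hamiltonian forces the rotated $H'_{12}$ to itself be stoquastic. So the problem reduces to: for which $(\theta_1,\theta_2)\in(-\tfrac{\pi}{2},\tfrac{\pi}{2}]^2$ is $R(\theta_1)\otimes R(\theta_2)\,(-2Z_1Z_2 - X_1X_2)\,R^T(\theta_1)\otimes R^T(\theta_2)$ stoquastic? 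I would write out this $4\times 4$ matrix explicitly. Its off-diagonal entries are trigonometric polynomials in $\theta_1,\theta_2$; setting the positive ones to be $\le 0$ yields a system of equalities and inequalities. I expect the constraints to decouple into (a) vanishing of certain ``mixed'' off-diagonal entries, which forces $\sin 2(\theta_1-\theta_2)=0$ hence $\theta_1=\theta_2$ (using the period), and (b) a one-variable condition on the common angle $\theta\equiv\theta_1=\theta_2$ that the surviving off-diagonal elements of $-2\cos 2\theta\,ZZ$-type and $-\ldots X X$-type pieces be non-positive; this isolates exactly $\theta\in\{0,\pm\tfrac{\pi}{4},\tfrac{\pi}{2}\}$.

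The main obstacle, and the step that needs the most care, is the quantitative ``$c$ large enough'' argument: one must verify that for angles \emph{near but not equal} to the four special points, the penalty term's rotated off-diagonal elements are bounded away from zero by a constant, so that no fine-tuned cancellation against the (bounded) contributions of $H_z,H_x,H_I$ can rescue stoquasticity. Concretely, I would show that the function measuring the maximal positive off-diagonal element of $R(\theta_1)^{\otimes 2}\otimes R(\theta_2)^{\otimes 2}\,H'_{12}\cdots$ is continuous, vanishes only at the four points, and hence is bounded below by a positive constant outside any fixed neighborhood of them; inside a small neighborhood of each special point one argues separately that the derivative structure still forces the off-diagonal element to grow linearly in the deviation, again at a rate independent of $n$. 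Combining the global lower bound with the linear local behavior gives a single constant $c$ that works, completing the reduction to the discrete four-point set and proving the lemma.
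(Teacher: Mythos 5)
Your proposal follows a genuinely different route from the paper's, and that route has a gap at its central step. The paper's argument is exact, not perturbative: after conjugation, $H''$ decomposes as $X_1X_2\otimes A + X_1Z_2\otimes B + Z_1X_2\otimes C + Z_1Z_2\otimes D + I\otimes H_I$ with, e.g., $B=\sin 2\theta_1\cos 2\theta_2\,(H_z-2cI)-\cos 2\theta_1\sin 2\theta_2\,(H_x-cI)$. The key observation is that $X_1Z_2\otimes B$ places every entry of $B$ into off-diagonal positions of $H''$ with \emph{both} signs, at positions touched by none of the other four blocks; stoquasticity therefore forces $B=[0]$ exactly, and likewise $C=[0]$. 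Because $c$ is chosen so that $(H_x-cI)^{\odot 2}\neq (H_z-2cI)^{\odot 2}$ (verified for $c=1$ by a trace computation), these two operator identities force $\sin 2\theta_1\cos 2\theta_2=\cos 2\theta_1\sin 2\theta_2=0$, yielding eight candidate pairs, and the sign condition on the $X_1X_2\otimes A$ block (using that $H_x-cI$ and $H_z-2cI$ have negative entries) eliminates four of them. No domination, continuity, or compactness enters.

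The gap in your version is the reduction ``for $c$ large enough, stoquasticity of the rotated $H'$ forces the rotated $cH'_{12}\otimes I$ alone to be stoquastic.'' Stoquasticity of a sum does not imply stoquasticity of a summand, and the domination argument you propose fails precisely where it is needed: near the four special points the positive off-diagonal entries of the rotated penalty vanish linearly in the angular deviation, while the rotated $Z_1Z_2\otimes H_z$ and $X_1X_2\otimes H_x$ blocks contribute to the very same matrix positions, also linearly in the deviation, with prefactors of order $\|H_z\|_{\max}$, $\|H_x\|_{\max}$. In the intended application these grow with the number of clauses containing the variable, so they are not $O(1)$, and the lemma is applied with $c=O(1)$ (indeed $c=1$); no constant $c$ makes the penalty dominate in a neighborhood of the special points, so fine-tuned cancellations there are not excluded. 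Your analysis of the rotated penalty in isolation is essentially right---it is stoquastic exactly at the four claimed points, though $(0,\tfrac{\pi}{2})$ and $(\tfrac{\pi}{2},0)$ are excluded by the sign of the $X_1X_2$ coefficient rather than by $\sin 2(\theta_1-\theta_2)=0$---but without the exact-vanishing argument for the cross blocks you cannot reduce the problem to the penalty alone. (Minor slip: the cross term in your conjugation formula for $Z_1Z_2$ should carry $+\tfrac{1}{2}\sin(4\theta)$, not $-$.)
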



\begin{proof}
It is straightforward to check that 	
\bes
\label{rotxz}
\begin{align}
&R(\theta_i) X R(-\theta_i)= \cos 2\theta_i X - \sin 2\theta_i Z,\\
&R(\theta_i) Z R(-\theta_i)= \sin 2\theta_i X + \cos 2\theta_i Z \ .
\end{align}
\ees
Using this we have:
\begin{align}
\label{eq:cases}
\begin{split}
	H'' &= [R(\theta_1) \otimes R(\theta_2)] H' [R(-\theta_1) \otimes R(-\theta_2)] \\
	&= X_1 X_2 \otimes [\sin 2 \theta_1 \sin 2 \theta_2  (H_z-2c I)+\cos 2 \theta_1 \cos 2 \theta_2  (H_x-cI)]\\
	&+X_1 Z_2 \otimes [\sin 2 \theta_1 \cos 2 \theta_2  (H_z-2c I)- \cos 2 \theta_1 \sin 2 \theta_2  (H_x-cI)]\\
	&+Z_1 X_2 \otimes [\cos 2 \theta_1 \sin 2 \theta_2  (H_z-2c I)- \sin 2 \theta_1 \cos 2 \theta_2  (H_x-cI)] \\
	&+Z_1 Z_2 \otimes [\cos 2 \theta_1 \cos 2 \theta_2  (H_z-2c I)+ \sin 2 \theta_1 \sin 2 \theta_2  (H_x-cI)]\\
	&+I_1 I_2 \otimes H_I\ .
	\end{split}
\end{align}
We next consider 
necessary conditions that $\theta_1$ and $\theta_2$ must satisfy in order to make 
$H''$ stoquastic. 

Let $A$ to $E$ be arbitrary matrices, and let $[0]$ denote the all-zero matrix. First, we note that the matrix $X_1 Z_2 \otimes B$ has both $+B$ and $-B$ as distinct off-diagonal elements for any non-zero matrix $B$ (a similar observation holds for $Z_1 X_2 \otimes C$). Second, we note that there are no common off-diagonal elements between $X_1 X_2 \otimes A$, $X_1 Z_2 \otimes B$, and $Z_1 X_2 \otimes C$. Third, there is no common off-diagonal elements between these three matrices and $Z_1 Z_2 \otimes D$ and $I_1 I_2 \otimes E$. Therefore, these terms cannot make the $\pm B$ in $X_1 Z_2 \otimes B$ non-positive. 

Thus, for $H''$ to become stoquastic it is necessary to have $B=[0]$:
\bea \label{eq:XZ}
\sin 2 \theta_1 \cos 2 \theta_2  (H_z-2c I)- \cos 2 \theta_1 \sin 2 \theta_2  (H_x-cI)=[0]\ .
\eea
Similar reasoning for $Z_1 X_2 \otimes C$ yields:
\bea \label{eq:ZX}
\cos 2 \theta_1 \sin 2 \theta_2  (H_z-2c I)- \sin 2 \theta_1 \cos 2 \theta_2  (H_x-cI)=[0]\ .
\eea

If the absolute value of at least one element of $H_z-2c I$ is different from the absolute value of the corresponding element in $H_x-cI$ (as discussed below, this is always possible by choosing an appropriate $c$), comparing the corresponding expressions from Eq.~\eqref{eq:XZ} and Eq.~\eqref{eq:ZX} we can conclude that: 
\bea \label{eq:sincos}
 \sin 2 \theta_1 \cos 2 \theta_2= \cos 2 \theta_1 \sin 2 \theta_2=0\ .
\eea
%
Equation~\eqref{eq:sincos}  gives rise to two possible cases: (i) $ \sin 2 \theta_1 = \sin 2 \theta_2=0$, or (ii) $\cos 2 \theta_1 = \cos2 \theta_2=0$. For $\theta_1,\theta_2 \in (-\frac{\pi}{2},\frac{\pi}{2}]$ these are the eight possible solutions:
\bea
(\theta_1,\theta_2) \in \{0,\frac{\pi}{2}\} \times \{0,\frac{\pi}{2}\} , \quad \text{or} \quad 
 (\theta_1,\theta_2) \in \{\pm\frac{\pi}{4}\} \times \{\pm\frac{\pi}{4}\} \ .
\eea
Now we observe that for an appropriately chosen $c$ (the value is discussed below), four of these solutions generate 
additional constraints on the allowed values of $\theta_1$ and $\theta_2$. 
To see this, we consider the $X_1 X_2 \otimes A$ term in Eq.~\eqref{eq:cases} in the two possible cases. 
For case (i), when $ \sin 2 \theta_1 = \sin 2 \theta_2=0$ and hence $\cos 2 \theta_1 =\pm 1$ and $\cos 2 \theta_2 =\pm 1$, this term becomes $X_1 X_2 \otimes \cos 2 \theta_1 \cos 2 \theta_2  (H_x-cI)]$. 
If $H_x-cI$ has any negative element, then the two combinations such that $\cos 2 \theta_1 \cos 2 \theta_2=-1$ flip the sign of this element and make the term non-stoquastic. I.e., if $H_x-cI$ has any negative element the only rotations that can keep $H''$ stoquastic satisfy $\cos 2 \theta_1 = \cos 2 \theta_2=1$ or  $\cos 2 \theta_1 = \cos 2 \theta_2=-1$.
Similarly, for case (ii), when $\cos 2 \theta_1 = \cos2 \theta_2=0$, if $H_z-2cI$ has any negative elements, the only rotations that can keep $H''$ stoquastic satisfy $\sin 2 \theta_1 = \sin 2 \theta_2=1$ or  $\sin 2 \theta_1 = \sin 2 \theta_2=-1$.

To summarize, for an appropriate choice of $c$, 
the solutions are necessarily one of these four points:
\bes
\begin{align}
\sin (2 \theta_1)=0, \cos(2 \theta_1)=1,\sin (2 \theta_2)=0, \cos(2 \theta_2)=1  \quad &\Longrightarrow (\theta_1,\theta_2)=(0,0) \\
\sin (2 \theta_1)=0, \cos(2 \theta_1)=-1,\sin (2 \theta_2)=0, \cos(2 \theta_2)=-1  \quad &\Longrightarrow (\theta_1,\theta_2)=(\frac{\pi}{2},\frac{\pi}{2})\\
\sin (2 \theta_1)=1, \cos(2 \theta_1)=0,\sin (2 \theta_2)=1, \cos(2 \theta_2)=0  \quad &\Longrightarrow (\theta_1,\theta_2)=(\frac{\pi}{4},\frac{\pi}{4})\\
\sin (2 \theta_1)=-1, \cos(2 \theta_1)=0,\sin (2 \theta_2)=-1, \cos(2 \theta_2)=0  \quad &\Longrightarrow (\theta_1,\theta_2)=(\frac{-\pi}{4},\frac{-\pi}{4})
\end{align}
\ees

The constraints on $c$ can be summarized as follows:
\begin{enumerate}
\item The absolute value of at least one element of $H_z-2c I$ is different from the absolute value of the corresponding element in $H_x-cI$, i.e., $(H_x-c I)^{\odot 2}\neq (H_z-2c I)^{\odot 2}$. (Here $\odot$ denotes the entrywise matrix multiplication.) 
\item Both  $H_x-cI$ and  $H_z-2cI$ have at least one negative element.
\end{enumerate}


Let $R'=\bigotimes_{i=3}^{2n} R(\theta_i)$ [we exclude $i=1,2$ because of the form of $H'$ in Eq.~\eqref{eq:H'}]. For our construction we have $H'=R' \tilde{\bar{H}}_{\mathrm{3SAT}} {R'}^{T}$. The term $\bar{Z}_1 \otimes H_z$ (recall that $\bar{Z}_i \equiv Z_{2i-1}Z_{2i}$ and $\bar{X}_i \equiv X_{2i-1}X_{2i}$) is composed of rotated 3SAT clause-Hamiltonians that share $\bar{x}_1$ in their corresponding 3SAT clauses. 
For example, we have $\bar{H}_{1jk}^{(111)}=\bar{Z}_1 \otimes (\bar{Z}_j \bar{Z}_k - 3 -  \bar{Z}_j - \bar{Z}_k) + I_1 I_2 \otimes H_I$ and therefore $R' (\bar{Z}_j \bar{Z}_k - 3 -  \bar{Z}_j - \bar{Z}_k) {R'}^{T}$ is included in $H_z$. Similarly, for a general clause that shares $\bar{x}_1$, namely $\bar{H}_{1jk}^{(1\beta\gamma)}$, $R' \bar{W}_j^{\bar{\beta}} \otimes \bar{W}_k^{\bar{\gamma}} (\bar{Z}_j \bar{Z}_k - 3 -  \bar{Z}_j - \bar{Z}_k) \bar{W}_j^{\bar{\beta}} \otimes \bar{W}_k^{\bar{\gamma}} {R'}^{T}$ would be included in $H_z$ (corresponds to a possible replacement of some of $\bar{Z}$ operators by $\bar{X}$ operators). 
It is straightforward to check that all the diagonal elements of each of these rotated clauses are non-positive. Namely, using Eq.~\eqref{rotxz} it is straightforward to check that the max norm, defined as $\|A\|_{\max}=\max_{ij} |[A]_{ij}|$, of any rotated Pauli operator is at most $1$, and therefore the same is true for any tensor product of rotated Pauli operators. In each 3SAT clause-Hamiltonian there are $3$ non-identity Pauli terms. Therefore they cannot generate a diagonal element that is larger than $3$. There is a $-3$ term for each clause, guaranteeing that all the diagonal terms remain non-positive.

Summing all these possible terms we have:
\bea
H_z=R' \left( \sum_{C_{1jk}^{(1\beta\gamma) } \in C} \bar{W}_j^{\bar{\beta}} \otimes \bar{W}_k^{\bar{\gamma}} (\bar{Z}_j \bar{Z}_k - 3 -  \bar{Z}_j - \bar{Z}_k) \bar{W}_j^{\bar{\beta}} \otimes \bar{W}_k^{\bar{\gamma}} \right) {R'}^T\ .
\eea
Based on the previous arguments, we conclude that all the diagonal elements of $H_z$ are non-positive. Furthermore, using the cyclic property of the trace and noting that except the $-3$  term all the other terms are traceless, we have $\Tr(H_z)= -3k$ with $k \in \mathbb{N}_0$ ($k=0$ only if $H_z=0$, i.e., when there is no $\bar{x}_1$ in any of the 3SAT-clauses).

$H_x$ is similar to $H_z$ but with a sum over $C_{1jk}^{(0\beta\gamma)}$. Using similar arguments, we conclude that all the diagonal elements of $H_x$ are all non-positive and $\Tr(H_x)= -3k'$ with $k' \in \mathbb{N}_0$.

Let us now show that $c=1$ satisfies the first of the two aforementioned constraints. As all the diagonal elements of $H_x$ and $H_z$ are non-positive, the first condition on $c$ is guaranteed if we find $c$ such that at least one of the diagonal element of $H_z-2cI$ and $H_x - cI$ are different. Note that  $\Tr(H_x-I)= -3k -2^{2n-2}$ and $\Tr(H_z-2I)= -3k' -2^{2n-1}$ where $k,k' \in \mathbb{N}_0$.  Obviously, these two traces cannot be equal for any value of $k$ and $k'$. 
 As the traces are different we conclude that at least one  diagonal element of $H_x-I$ is different form the corresponding element of $H_z-2I$. Therefore $c=1$ satisfies this constraint.

The second constraint is  satisfied for any $c>0$, and in particular $c=1$.
To see this note that all the diagonal elements of $H_x$ and $H_z$ are non-positive. Therefore we conclude that $H_x-cI$ and $H_z-2cI$ both have negative diagonal elements if  $c>0$.

\end{proof}

\subsection{Any $R$ that cures $\tilde{\bar{H}}_{\mathrm{3SAT}}$ provides a satisfying assignment for the corresponding 3SAT instance: restated}

Lemma~\ref{lemmacont} shows that for any $\tilde{\bar{H}}_{\mathrm{3SAT}}$, any curing rotation $R=\bigotimes_{i=1}^{2n} R(\theta_i)$ has to satisfy the condition that $(\theta_{2i-1},\theta_{2i}) \in \{ (\frac{\pi}{2},\frac{\pi}{2}),(\frac{\pi}{4},\frac{\pi}{4}),(0,0), (\frac{-\pi}{4},\frac{-\pi}{4})\}$ $ \forall i$.  If $(\theta_{2i-1},\theta_{2i})\in\{(0,0),(\frac{\pi}{2},\frac{\pi}{2})\}$ we assign $x_i=0$, while if $(\theta_{2i-1},\theta_{2i})\in\{(-\frac{\pi}{4},-\frac{\pi}{4}),(\frac{\pi}{4},\frac{\pi}{4})\}$ we assign $x_i=1$, since rotations with the angles in each pair have the same effect. 

As a consequence of Lemma~\ref{lemmacont} and the constraints it enforces on the possible solutions, the only way that a curing rotation can make a clause Hamiltonian non-stoquastic is if it would result in a $+\bar{X}_{i}\bar{X}_{j}\bar{X}_{k}$ term.  Since no other 3SAT-clause Hamiltonian in $\tilde{\bar{H}}_{\mathrm{3SAT}}$ contains an identical $\bar{X}_{i}\bar{X}_{j}\bar{X}_{k}$ term, these positive off-diagonal elements cannot be cancelled out or made negative regardless of the choice of the other variables in the assignment. Therefore if $R$ cures $\tilde{\bar{H}}_{\mathrm{3SAT}}$ it also necessarily separately cures all the Hamiltonian clauses in $\tilde{\bar{H}}_{\mathrm{3SAT}}$. By construction if $R$ cures a Hamiltonian clause $\bar{H}_{ijk}^{(\alpha\beta\gamma)}$, the string $x$ constructed above, satisfies the corresponding 3SAT clause $C_{ijk}^{(\alpha\beta\gamma)}$. Thus $x$ satisfies  all the clauses in the corresponding 3SAT instance.


\section{A potential encryption protocol based on `secretly stoquastic' Hamiltonian}
\label{SI5}

By generating 3SAT instances with  planted solutions (see, e.g., Refs.~\cite{Hen:2015rt,Hamze:2017aa}) and transforming these to non-stoquastic Hamiltonians via the mapping prescribed by Theorems~\ref{thm1} (or ~\ref{thm2}), one would be able to generate $3$-local (or $6$-local) Hamiltonians that are stoquastic but are computationally hard to transform into stoquastic form. 

This construction may have cryptographic implications. For example, imagine planting a secret $n$-bit message in the (by design unique) ground state of a stoquastic Hamiltonian. Since the solution is planted, Alice automatically knows it. She checks that QMC can find the ground state in a prescribed amount of time $\tau(n)$, and if this is not the case, she generates a new, random, stoquastic Hamiltonian with the same planted solution and checks again, etc., until this condition is met. Alice and Bob pre-share the secret key, i.e., the curing transformation, and after they separate Alice transmits only the $O(n^2)$ coefficients of the non-stoquastic Hamiltonians (transformed via the mapping prescribed by Theorem~\ref{thm1}) for every new message she wishes to send to Bob. To discover Alice's secret message, Bob runs QMC on the cured Hamiltonian. Since Alice verified that QMC can find the ground state in polynomial time, Bob will also find the ground state in polynomial time. 

This scheme should be viewed as merely suggestive of a cryptographic protocol, since as it stands it contains several potential loopholes: (i) Its security depends on the absence of efficient two-or-more qubit curing transformations, as well as the absence of algorithms other than QMC that can efficiently find the ground state of the non-stoquastic Hamiltonians generated by Alice; (ii) The fact that Alice must start from Hamiltonians for which the ground state can be found in polynomial time may make the curing problem easy as well; (iii) this scheme transmits an $n$-bit message using $r n^2$ message bits, where $r$ is the number of bits required to specify the $n^2$ coefficients of the transmitted non-stoquastic Hamiltonian, so it less efficient than a one-time pad. Additional research is needed to improve this into a scheme that overcomes these objections.

\end{document}